\documentclass{ws-fnl}

\usepackage{amsmath,amssymb,bm}
\usepackage{graphicx}
\usepackage{booktabs}
\usepackage{xcolor}
\usepackage[verbose]{hyperref}
\hypersetup{colorlinks=false,allbordercolors=blue,pdfborderstyle={/S/U/W 1}}
\makeatletter
\@ifundefined{c@prop}{\newtheorem{prop}{Proposition}}{}
\@ifundefined{c@lem}{}{}
\@ifundefined{c@rem}{\newtheorem{rem}[prop]{Remark}}{}
\@ifundefined{proof}{%
  \newenvironment{proof}{\par\noindent{\itshape Proof.}\ }%
                        {\hfill$\square$\par\medskip}}{}
\makeatother

\newcommand{\dd}{\mathrm{d}}

\begin{document}

\markboth{M. Bianucci, R. Mannella}
{It\^o versus Stratonovich for state-dependent shot noise}

\catchline{}{}{}{}{}

\title{It\^o versus Stratonovich for shot noise with state-dependent
amplitude: why the answer is Marcus}

\author{Marco Bianucci}
\address{Istituto di Scienze Marine, Consiglio Nazionale delle Ricerche
(ISMAR--CNR), 19032 Lerici (SP), Italy}

\author{Riccardo Mannella}
\address{Dipartimento di Fisica, Universit\`a di Pisa, 56100 Pisa, Italy
\\ riccardo.mannella@unipi.it}

\date{\today}
\maketitle

\begin{history}
\received{(Day Month Year)}
\accepted{(Day Month Year)}
\end{history}

\begin{abstract}
For Gaussian white noise the It\^o and Stratonovich prescriptions differ
by $O(\dd t)$ per time step, which accumulates into the familiar spurious
drift $\tfrac12 D\,II'$. We show that for renewal shot noise with
state-dependent gain, $\dot x=-C(x)-I(x)\xi[t]$ with $\xi[t]$ a train of
Dirac spikes of finite random amplitude, the two prescriptions differ by
a finite amount at every event, and that where $I$ vanishes the
difference can become qualitative. The Marcus kick, the jump-process
counterpart of the Stratonovich prescription, is the flow map of
$\dd x/\dd s=I(x)$: no kick can cross a zero $x^\ast$ of $I$, for any
jump and waiting-time statistics. The It\^o kick is only the tangent map
of that flow. For $I(x)=1+\beta x$ it crosses $x^\ast=-1/\beta$ if and
only if $\xi\beta>1$ and collapses the whole line onto $x^\ast$ at
$\xi\beta=1$; below threshold it is itself a Marcus kick with a remapped
jump law. Bounded jump laws thus have a threshold, while unbounded ones
leak across the barrier for \emph{any} parameters. The discrepancy cannot
be absorbed into a renormalised drift beyond lowest order, and only the
Marcus kick is generated by a single operator, which makes the master
equation of Bianucci, Bologna and Mannella (2026) intrinsically Marcus.
Numerically, the standard Heun kick is a second-order truncation of the
Marcus map that preserves the barrier only because its order is even; it
shortens every jump, biasing second moments in proportion to the squared
jump amplitude and the jump kurtosis, and depleting the region near the
barrier. We give the exact-flow integrator that replaces it.
\end{abstract}

\keywords{Shot noise; renewal processes; It\^o--Stratonovich dilemma;
Marcus canonical equation;
multiplicative noise; continuous time random walks.}

\section{Introduction}

The It\^o--Stratonovich dilemma is among the oldest and most discussed
questions in the theory of multiplicative noise~\cite{VanKampen1981,Risken1996}.
For the Langevin equation $\dot x=-C(x)-I(x)\zeta(t)$ driven by Gaussian
white noise with $\langle\zeta(t)\zeta(t')\rangle=D\,\delta(t-t')$, the two
prescriptions are related by the spurious drift $\tfrac12 D\,I(x)I'(x)$.
The drift is finite, but it builds up from a difference of order $\dd t$
per time step: at each step the two readings differ only infinitesimally,
because the increments of the Wiener process are infinitesimal. Which of the two is physically
correct is then a question about the vanishing correlation time of the
real physical fluctuation, and the standard answer -- Stratonovich when
the white noise is the idealisation of a coloured one, It\^o when the
increments are genuinely non-anticipating -- is by now textbook
material. The state of that discussion was surveyed in this journal in
Ref.~\refcite{Mannella2012}; the present paper may be read as a sequel to
it, asking what survives when the idealised white noise is replaced by a
train of spikes of finite amplitude, i.e. a renewal
train of spikes,
\begin{equation}
  \xi(t)=\sum_{q=0}^{\infty}\xi_q\,
  \delta\!\left(t-t_0-\sum_{k=0}^{q}\theta_k\right),
  \label{eq:spike}
\end{equation}
with waiting times $\theta_k$ drawn from $\psi(\theta)$ and amplitudes
$\xi_q$ from $p(\xi)$. Such shot-noise forcing is the standard driving of
continuous-time random walk (CTRW) models and appears among others, with
state-dependent gain, in synaptic input in neuroscience, in impulsive
forcing of slow climate modes, and in intermittent crack
growth (see Ref.~\refcite{BianucciMain} for more examples and related
literature). Here the amplitude $\xi_q$ of a single event
is a \emph{finite} random number, not an infinitesimal increment. The
ambiguity in evaluating $I(x)$ ``during'' a spike therefore does not
produce a small correction: it produces two different, finitely
separated, maps of the line onto itself.

The main purpose of this paper is to make that statement precise and to show
that its consequences are not merely quantitative. For noise with jumps, the natural counterpart of the Stratonovich prescription is not the Stratonovich integral itself but the canonical equation introduced by Marcus~\cite{Marcus1978,Marcus1981}, in which each jump is realised by flowing along the vector field of the noise for a virtual time equal to the jump amplitude. We call it the Marcus prescription, and discuss its relation to the Stratonovich integral in Section~\ref{sec:marcus}.
Our main result is
that the Marcus prescription  endows the kick with the structure of a flow -- hence with invertibility, a group law, and invariant sets --
while the It\^o prescription retains only its linearisation and in general loses all three.  When $I$ has a zero, the difference becomes a difference in the
support of the probability density, which is the most visible thing a
stochastic model can have. A second result of this paper is to pinpoint the appropriate numerical scheme to accurately integrate the SDE, under the Marcus prescription in the presence of delta kicks.

\section{The two kick maps}

\subsection{The Marcus (flow-map) kick}

Consider the SDE
\begin{equation}
  \dot x = -C(x) - I(x)\,\xi[t]
  \label{eq:sde}
\end{equation}
and regularise the $q$-th spike as a smooth pulse of unit area and width
$\epsilon$, $\delta_\epsilon(t-t_q)\to\delta(t-t_q)$. During the pulse
the drift $C$, being bounded, contributes $O(\epsilon)$ and may be
dropped, so that
\begin{equation}
  \frac{\dd x}{\dd t}=-\xi_q\,I(x)\,\delta_\epsilon(t-t_q).
\end{equation}
Introducing the virtual time
$s(t)=-\xi_q\int_{t_q-\epsilon}^{t}\delta_\epsilon(u-t_q)\dd u$,
which runs from $0$ to $-\xi_q$ irrespective of the pulse shape,
the equation becomes autonomous,
\begin{equation}
  \frac{\dd x}{\dd s}=I(x),
  \label{eq:flow}
\end{equation}
and the kick is the time-$(-\xi_q)$ map of its flow:
\begin{equation}
  x(t_q^{+}) = \phi_{-\xi_q}\!\left(x(t_q^{-})\right),
  \qquad
  \phi_s(x)=Y^{-1}\!\left(Y(x)+s\right),
  \qquad
  Y(x):=\int^{x}\frac{\dd z}{I(z)} 
  \label{eq:stratmap}
\end{equation}
The simplest case when $I(x)$ has a zero is $I(x)=1+\beta x$: then the map reads
\begin{equation}
  u \mapsto u\,e^{-\xi\beta},
  \qquad u:=x-x^\ast,\quad x^\ast=-\tfrac1\beta,
  \label{eq:linearcase}
\end{equation}

\begin{prop}[Shape independence]
\label{prop:shape}
The map \eqref{eq:stratmap} is independent of the regularising pulse
profile $\delta_\epsilon$ and of the limit $\epsilon\to0$, provided $\delta_\epsilon$ has unit area, $C$ is bounded on the relevant
interval, and the flow of Eq.~\eqref{eq:flow} is complete (for instance, $I$
of at most linear growth).
\end{prop}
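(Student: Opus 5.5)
We prove the statement in two stages: first the idealised equation without drift, where shape independence is exact, then a Gronwall estimate showing that reinstating $C$ changes the endpoint by $O(\epsilon)$.

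\textbf{Stage 1: the drift-free equation.} Consider
\begin{equation*}
\frac{\dd x}{\dd t}=-\xi_q I(x)\,\delta_\epsilon(t-t_q)
\end{equation*}
on the support $[t_q-\epsilon,t_q+\epsilon]$ of the pulse. Define the virtual time $s(t)=-\xi_q\int_{t_q-\epsilon}^{t}\delta_\epsilon(u-t_q)\,\dd u$, which is absolutely continuous and monotone. Let $y(s)$ solve the autonomous equation \eqref{eq:flow} with $y(0)=x(t_q^-)$; completeness of the flow guarantees that $y$ exists on the whole range of $s$. Then $\tilde x(t):=y(s(t))$ satisfies
\begin{equation*}
\frac{\dd\tilde x}{\dd t}=y'(s)\,\dot s=-\xi_q I(\tilde x)\,\delta_\epsilon.
\end{equation*}
By uniqueness (take $I$ locally Lipschitz, which the paper implicitly assumes), $\tilde x$ is the solution.

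This holds even if $s$ is not strictly monotone, i.e.\ if $\delta_\epsilon$ changes sign or vanishes on subintervals. The composition $y\circ s$ still solves the ODE, so no reparametrisation inverse is needed. At the end of the pulse, $s=-\xi_q$ by unit area, so $x(t_q+\epsilon)=\phi_{-\xi_q}(x(t_q^-))$ exactly, for every profile and every $\epsilon$. The explicit form $\phi_s=Y^{-1}(Y+s)$ follows from separating variables on intervals where $I\neq0$. If $x$ starts at a zero of $I$, it is a fixed point of the flow.

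\textbf{Stage 2: reinstating the drift.} Let $x_\epsilon$ solve the full equation $\dot x=-C(x)-\xi_q I(x)\delta_\epsilon$, and let $\tilde x_\epsilon$ be the drift-free solution from Stage 1. Write both in the $s$-variable, or compare them directly. Their difference $e=x_\epsilon-\tilde x_\epsilon$ satisfies
\begin{equation*}
|\dot e|\le \|C\|_\infty+|\xi_q|\,L\,|e|\,\delta_\epsilon ,
\end{equation*}
where $L$ is the Lipschitz constant of $I$ on a compact set containing both trajectories. Gronwall then gives
\begin{equation*}
|e(t_q+\epsilon)|\le 2\epsilon\|C\|_\infty\exp\!\Big(L|\xi_q|\int|\delta_\epsilon|\Big).
\end{equation*}
For nonnegative pulses, or more generally for $\int|\delta_\epsilon|$ bounded uniformly in $\epsilon$, this tends to $0$. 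Hence the limit $\epsilon\to0$ yields the map \eqref{eq:stratmap}, independent of profile.

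\textbf{Main obstacle.} The hard step is the a priori confinement of both trajectories to a compact set, on which $L$ and $\sup|C|$ are finite. We obtain this from completeness and linear growth: $|I(x)|\le a+b|x|$ gives, via Gronwall in $s$, a bound on $|x|$ depending only on $|\xi_q|$, $\int|\delta_\epsilon|$ and the initial point. Adding the bounded drift perturbs this bound by only $O(\epsilon)$. The estimate must hold uniformly in $\epsilon$, which is why we impose $\int|\delta_\epsilon|$ bounded (automatic for positive pulses).
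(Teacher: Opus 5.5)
Your proof follows the paper's own argument: because Eq.~\eqref{eq:flow} is autonomous, $t\mapsto\phi_{s(t)}(x(t_q-\epsilon))$ solves the drift-free pulse equation for any path $s(t)$, monotone or not, so only the endpoint $s=-\xi_q$ matters, and the drift is then an $O(\epsilon)$ correction. Your Gronwall step turns the paper's one-line claim that the drift contributes $O(\epsilon\sup|C|)$ into an actual estimate. It also correctly exposes that for sign-changing profiles this needs a bound uniform in $\epsilon$, which the paper leaves implicit: a bound on $\int|\delta_\epsilon|$ suffices, and a bound on the range of $s(t)$ would already be enough.
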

\begin{proof}
Since Eq.~\eqref{eq:flow} is autonomous, $x(t)=\phi_{s(t)}\big(x(t_q-\epsilon)\big)$
for any path $s(t)$, monotone or not, and only its end value $s=-\xi_q$
matters.  The
neglected drift contributes $O(\epsilon\sup|C|)$.
\end{proof}

Proposition~\ref{prop:shape} is the exact analogue, for spikes of finite
area, of the Wong--Zakai theorem~\cite{WongZakai1965}. It is what makes
the Marcus prescription the natural one within a stochastic
Liouville description along single trajectories: the continuity equation
holds for each realisation, and the kick is a mass-conserving transport along Eq.~\ref{eq:flow}.

\subsection{The It\^o kick}

The It\^o (pre-point) prescription evaluates $I$ at the state immediately
before the event,
\begin{equation}
  x(t_q^{+}) = x(t_q^{-}) - \xi_q\,I\!\left(x(t_q^{-})\right)
  \;=:\; m_{\xi_q}\!\left(x(t_q^{-})\right).
  \label{eq:itomap}
\end{equation}
Expanding \eqref{eq:stratmap} in powers of the jump amplitude,
\begin{equation}
  \phi_{-\xi}(x) = x - \xi I + \frac{\xi^2}{2}II'
  - \frac{\xi^3}{6}I\left(II''+I'^2\right)+O(\xi^4),
  \label{eq:expansion}
\end{equation}
shows that \eqref{eq:itomap} reproduces only the first-order term. The
familiar Stratonovich--It\^o drift $\tfrac12 II'$ is recognisable as the
$O(\xi^2)$ coefficient; what is new here is that $\xi$ is a finite random
variable, so the series is not a small correction but an expansion that
may not even converge usefully. 

For $I(x)=1+\beta x$, the It\^o map reads
\[u\mapsto u(1-\xi\beta) \] 
with $u$ as in Eq.~\eqref{eq:linearcase}: the truncation at first order of the exponential, a sensible approximation only for $|\xi|\beta\ll1$..

\begin{rem}[Post-point prescription]
The post-point prescription~\cite{klimontovich1990,Ao2004Potential} 
$x^{+}=x^{-}-\xi I(x^{+})$ gives $u\mapsto u/(1+\xi\beta)$, i.e.\ the
$[0/1]$ Pad\'e approximant of $e^{-\xi\beta}$. It has a mirrored threshold: the map is undefined at $\xi\beta=-1$ and,
for $\xi\beta<-1$, reverses orientation and carries trajectories across
$x^\ast$.
\end{rem}


Reference~\refcite{BianucciMain} notes in passing that calculi other than
Stratonovich ``can be readily studied, both numerically and
theoretically, renormalizing $C(x)$'', citing the white-noise analysis of
Ref.~\refcite{Mannella2012}. For Gaussian white noise driving 
this is exact: the
whole It\^o--Stratonovich difference is the drift
$\tfrac12 D\,I I'$ and nothing else. For spike noise of finite amplitude
it holds only to lowest order. Indeed, from \eqref{eq:expansion} the two
kick maps differ by
\begin{equation}
  \phi_{-\xi}(x)-m_\xi(x)
  = \frac{\xi^2}{2}I I' - \frac{\xi^3}{6}I\left(II''+I'^2\right)+O(\xi^4),
  \label{eq:mapdiff}
\end{equation}
and only the first term, averaged over $p(\xi)$ and over the event 
rate $R(t)$,
is equivalent to the drift renormalisation
$\Delta C=-\tfrac{1}{2}R(t)\overline{\xi^2}\,I I'$. The remainder is a
genuine difference between two \emph{bounded jump operators}, and a
bounded jump operator is not a first-order differential operator: no
choice of $\Delta C$ reproduces it. The obstruction is not merely one of accuracy. Since $I(x^\ast)=0$, both
kick maps fix $x^\ast$, and every term of Eq.~\eqref{eq:mapdiff} carries a
factor $I$ and vanishes there; so does any drift $\Delta C$ assembled from
them. A renormalised drift therefore leaves the velocity at $x^\ast$
unchanged: whatever transport across $x^\ast$ the original drift $C$
produces, the renormalised one produces identically, and nothing more.
The It\^o kick, by contrast, transports probability across $x^\ast$ above
the threshold of Proposition~\ref{prop:threshold}, on top of whatever $C$
does, and no choice of $\Delta C$ can reproduce that. We return to this in
Sec.~\ref{sec:barrier}.


The maps of Eq.~\eqref{eq:stratmap} and Eq.~\eqref{eq:itomap} differ in an algebraic
property that matters for renewal dynamics.

\begin{prop}[Group law]
\label{prop:group}
The Marcus kicks form a one-parameter group,
$\phi_{s_1}\circ\phi_{s_2}=\phi_{s_1+s_2}$, so that $n$ events of
amplitudes $\xi_1,\dots,\xi_n$ occurring at coincident times are
equivalent to a single event of amplitude $\sum_i\xi_i$. The It\^o maps
do not:
$m_{\xi_1}\circ m_{\xi_2} \ne m_{\xi_1+\xi_2}$ unless $I$ is constant,
the discrepancy being $\xi_1\xi_2\,II'+O(\xi^3)$.
\end{prop}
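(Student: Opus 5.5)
The group law for the Marcus kicks follows from the explicit representation $\phi_s(x)=Y^{-1}(Y(x)+s)$ of Eq.~\eqref{eq:stratmap}. On any interval where $I$ has no zero, $Y$ is strictly monotone and hence invertible. Then
\[
\phi_{s_1}\circ\phi_{s_2}(x)=Y^{-1}\bigl(Y(Y^{-1}(Y(x)+s_2))+s_1\bigr)=Y^{-1}\bigl(Y(x)+s_1+s_2\bigr)=\phi_{s_1+s_2}(x).
\]
To cover zeros of $I$ without handling $Y$ piecewise, I would instead argue from uniqueness of solutions of the autonomous ODE \eqref{eq:flow}. Both $s\mapsto\phi_{s+s_2}(x)$ and $s\mapsto\phi_s(\phi_{s_2}(x))$ solve $\dd x/\dd s=I(x)$ and agree at $s=0$, so they coincide. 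Completeness of the flow, assumed in Proposition~\ref{prop:shape}, guarantees this for all $s$; local Lipschitz continuity of $I$ gives uniqueness. With $s_i=-\xi_i$, induction on $n$ gives $\phi_{-\xi_1}\circ\cdots\circ\phi_{-\xi_n}=\phi_{-\sum_i\xi_i}$. This is the statement about coincident events. It also matches Proposition~\ref{prop:shape}: coincident spikes are one pulse of total area $\sum_i\xi_i$.

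For the It\^o maps I would compute the composition directly:
\[
m_{\xi_1}\bigl(m_{\xi_2}(x)\bigr)=x-\xi_2 I(x)-\xi_1 I\bigl(x-\xi_2 I(x)\bigr).
\]
Taylor-expanding $I(x-\xi_2 I)=I-\xi_2 II'+\tfrac12\xi_2^2I^2I''+\dots$ gives
\[
m_{\xi_1}\circ m_{\xi_2}(x)=m_{\xi_1+\xi_2}(x)+\xi_1\xi_2\,I I'+O(\xi^3).
\]
This is the claimed discrepancy. Notably, the remainder is $-\tfrac12\xi_1\xi_2^2 I^2I''+\dots$, so it is not symmetric in $\xi_1,\xi_2$, and in general the It\^o maps do not even commute.

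For the converse ("unless $I$ is constant"), I would show that $m_{\xi_1}\circ m_{\xi_2}=m_{\xi_1+\xi_2}$ for all small $\xi_1,\xi_2$ forces $II'\equiv0$. Dividing the exact difference by $\xi_1\xi_2$ and letting $\xi_1,\xi_2\to0$ gives this condition. It implies $(I^2)'\equiv0$, so $I^2$ is constant and, by continuity, $I$ is constant. If $I$ is constant, $m_\xi(x)=x-\xi I$ is a translation and the group law holds trivially.

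The main obstacle is the case in which the law is only required for particular finite amplitudes. There the quadratic coefficient argument needs the identity on an open set of $(\xi_1,\xi_2)$, or analyticity of $I$. I would state the negation in that generic sense. For the linear case $I=1+\beta x$ the failure can be checked exactly, with no expansion. In the variable $u$ the composition is multiplication by $(1-\xi_1\beta)(1-\xi_2\beta)$, while $m_{\xi_1+\xi_2}$ multiplies by $1-(\xi_1+\xi_2)\beta$. These differ by exactly $\xi_1\xi_2\beta^2$ times $u$. Since $II'=\beta(1+\beta x)=\beta^2 u$, this is the discrepancy $\xi_1\xi_2 II'$, here without any $O(\xi^3)$ remainder.
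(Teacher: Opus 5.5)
Your proposal is correct and follows the route the paper takes implicitly. The paper gives no separate proof: it presents the Marcus part as the standard group property of the flow of the autonomous equation \eqref{eq:flow}, recalled rather than claimed as new, and obtains the It\^o discrepancy by direct expansion of $m_{\xi_1}\circ m_{\xi_2}$, which is exactly what your uniqueness argument and Taylor expansion supply; your converse via $II'\equiv0\Rightarrow(I^2)'\equiv0$ (for the identity holding near $\xi_1=\xi_2=0$) and your exact linear-case check $\xi_1\xi_2\beta^2u$ are sound additions the paper leaves unstated.
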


A practical consequence is that the sub-stepping used in SDE integrators,
where several events falling in a single time step are accumulated into
one kick of amplitude $A=\sum_i\xi_i$, is exact under Marcus and
introduces an $O(\xi^2)$ error per coincidence under It\^o.

Proposition~\ref{prop:group} is not, however, what makes an exact master
equation possible. Let $\mathcal{L}_0P=\partial_x(CP)$ be the drift
operator, $\mathcal{K}=\int p(\xi)\,\mathcal{K}_\xi\,\dd\xi$ the averaged
kick operator on densities, with $\mathcal{K}_\xi$ the push-forward under
either kick map, and $\Psi(t)=\int_t^\infty\psi(t')\,\dd t'$ the survival function.
Conditionally on the event times the density evolves by alternating
$e^{\mathcal{L}_0\theta}$ and $\mathcal{K}$, and for independent,
identically distributed waiting times the renewal series sums in Laplace
space,
\begin{equation}
  \hat P(s)=\hat\Phi_\Psi(s)\left[1-\mathcal{K}\,\hat\Phi_\psi(s)\right]^{-1}P_0,
  \qquad
  \hat\Phi_f(s)=\int_0^\infty f(t)\,e^{-st}\,e^{\mathcal{L}_0t}\,\dd t ,
  \label{eq:renewal}
\end{equation}
for \emph{any} kick operator, whether or not the kicks commute and whether
or not the kick map is invertible. An exact, non-local evolution equation
therefore exists under both prescriptions. What distinguishes Marcus is
its form. The Marcus push-forward is $\mathcal{K}_\xi=e^{\xi\partial_xI(x)}$,
the exponential of a single operator independent of $\xi$, so that
$\mathcal{K}=\hat p(-i\partial_xI(x))$: the jump law enters only through
its characteristic function, and its moments multiply powers of one and
the same operator. This is the structure on which the cumulant expansion
and the mean-field local equation of Ref.~\refcite{BianucciMain} rest. For a general
$I$ the It\^o push-forwards $\mathcal{M}_\xi$ admit no such common
generator, and the averaged It\^o kick is not a function of one operator
(the linear case is special; see Remark~\ref{rem:remap}).
 
The block structure of the $n$-time correlation functions of (1), in
which a block of $m$ coincident times contributes $\overline{\xi^m}$, is
a separate matter: it arises because the same spike is counted $m$ times,
and it is a property of the noise alone, independent of any kick map.
The single-generator form of $\mathcal{K}$ is what allows it to be carried
over to the evolution of $P$. In this precise sense the exact theory of
Ref.~\refcite{BianucciMain} is a Marcus theory.

\subsection{Relation to the Stratonovich integral}
\label{sec:marcus}

The map \eqref{eq:stratmap} is not a notational variant of the
Stratonovich integral. For driving processes with jumps the
Fisk--Stratonovich integral and the flow-map prescription are different
objects, and the one defined by \eqref{eq:stratmap} is the \emph{Marcus
canonical} (or simply canonical) equation, introduced by
Marcus~\cite{Marcus1978,Marcus1981} 
and developed in generality by Kurtz,
Pardoux and Protter~\cite{KPP1995}.

The reason the distinction matters is that the two properties for which
physicists prefer Stratonovich in the Brownian case -- the ordinary chain
rule, and the fact that it arises as the small-correlation-time limit of
a smooth approximation -- are retained in the jump case by Marcus and not
by the Stratonovich integral proper. This comparison is made explicitly
by Chechkin and Pavlyukevich~\cite{ChechkinPavlyukevich2014}, who derive
the Marcus equation as the limit of Ornstein--Uhlenbeck coloured jump
noise of vanishing correlation time. Proposition~\ref{prop:shape} is the
one-dimensional specialisation of that statement to a single spike of
finite area, and Proposition~\ref{prop:group} is the semigroup property
of the Marcus flow. Neither is claimed here as new; they are recalled
because the rest of the paper rests on them.

What is specific to the present work is what the flow structure implies
once $I$ has a zero -- the one-way, repelling barrier of Proposition~\ref{prop:barrier} and the
equilibrium law Eq.~\eqref{eq:wkb_result} --
together with the sharp threshold, the collapse and the orientation
reversal of the It\^o map (Proposition~\ref{prop:threshold}), the
resulting bounded/unbounded classification of Table~\ref{tab:crossing},
and the analysis of the Heun kick in Sec.~\ref{sec:numerics}.

\begin{rem}[Relation to the physicists' usage]
Much of the physics literature on delta-correlated noise, including
Ref.~\refcite{BianucciMain}, uses ``Stratonovich'' to mean the
smooth-pulse (Wong--Zakai) limit. Under that reading the two names
denote the same object here, and no result of
Ref.~\refcite{BianucciMain} is affected. We nonetheless use ``Marcus''
throughout, because the point of this paper is precisely that the
tangent-map and flow-map prescriptions are inequivalent, and that
distinction is invisible if both are called Stratonovich.
\end{rem}

\subsection{The series and its resummation}\label{sec:series}

For Poisson white noise, Di~Paola and Falsone showed that the correction
relating the It\^o and the Stratonovich prescriptions is not a single
term but an infinite series in the jump
amplitude~\cite{DiPaolaFalsone1993a,DiPaolaFalsone1993b}. That is the
series side of our Eq.~\eqref{eq:mapdiff}, and it already contains the
statement that no finite drift renormalisation can account for the
difference.

Two things are added here. The first is the resummation: the series is
the Taylor expansion of the flow map $\phi_{-\xi}$, so the correction has
a closed form, and for $I=1+\beta x$ that closed form is the elementary
factor $e^{-\xi\beta}$ of Eq.~\eqref{eq:linearcase}. 
The second, and the more consequential, concerns the barrier. Whether
$x^\ast$ is preserved depends on the sign of a multiplicative factor,
not on the size of a correction, and it is not settled order by order.
For $I=1+\beta x$ the truncation of $e^{-z}$, $z=\xi\beta$, at order $n$
is $T_n(z)=\sum_{k=0}^{n}(-z)^k/k!$. For odd $n$ it has exactly one real
root, which is positive: $z=1$ at first order -- the It\^o threshold of
Proposition~\ref{prop:threshold} -- $z\simeq1.596$ at third order,
$z\simeq2.181$ at fifth. For even $n$ it has none, like the exact factor.
(Since $T_n'=-T_{n-1}$, at a critical point of an even-order truncation
$T_n(z_0)=z_0^n/n!>0$; an odd-order one is therefore strictly
decreasing, and positive for $z\le0$.) Truncations of odd order thus all
cross the barrier, each at a different threshold, while those of even
order preserve it, the second-order one included. Preservation of the
barrier alternates with the parity of the truncation instead of
converging to the exact behaviour; only the resummed map guarantees it.

\begin{rem}[The It\^o master equation]
Equation \eqref{eq:MEito} is an instance of the generalised
Fokker--Planck equation derived, in the It\^o interpretation and for
multiplicative noise with arbitrary increment distribution, by Denisov,
Horsthemke and H\"anggi~\cite{Denisov2009}. The integral defined by
Kanazawa, Sagawa and Hayakawa~\cite{Kanazawa2012} as the white-noise limit
of coloured non-Gaussian noise is neither It\^o nor Fisk--Stratonovich
and is closely related to the flow map used here.
\end{rem} 

\section{Zeros of \texorpdfstring{$I$}{I}: a barrier in one prescription, a threshold in the other}
\label{sec:barrier}

The main result of the paper is most clearly stated against the
white-noise case. With Gaussian white noise, a zero $x^\ast$ of $I$
behaves in the same way under both prescriptions: the paths are
continuous, the noise vanishes at $x^\ast$, and so does the spurious drift
$\tfrac12 D\,II'$ that distinguishes the two readings. In both, the
velocity at $x^\ast$ is $-C(x^\ast)$, and the prescriptions agree on which
side of $x^\ast$ a trajectory can reach; they differ quantitatively, not
in the topology of the dynamics. With spikes of finite amplitude this is
no longer so. A finite jump can carry a trajectory across a point that no
continuous path can cross, and whether it does depends on the
prescription: the flow-map kick never does, the It\^o kick does above a
sharp threshold. This is the concrete content of the statement that the
difference is finite rather than infinitesimal.

\subsection{Marcus: invariant sectors}

\begin{prop}[Barrier to kicks]
\label{prop:barrier}
Let $I$ be locally Lipschitz, with a complete flow, and let $I(x^\ast)=0$.
Then $|Y(x)|\to\infty$ as $x\to x^\ast$, and consequently
$\phi_s$ maps each maximal interval on which $I$ has constant sign into
itself, for every finite $s$. Under the Marcus prescription no kick can therefore carry a trajectory
across $x^\ast$, whatever the jump amplitudes, the jump distribution, or
the waiting-time statistics.
\end{prop}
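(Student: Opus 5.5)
The plan is to prove the barrier property first from ODE uniqueness, and only then the divergence of $Y$, since the second does not drive the first. Because $I$ is locally Lipschitz, the autonomous equation \eqref{eq:flow} has unique solutions. The constant function $x(s)\equiv x^\ast$ is a solution, since $I(x^\ast)=0$. Suppose the flow line through some $x_0\neq x^\ast$ reached $x^\ast$ at a finite virtual time $s_1$. Then two distinct solutions would pass through $x^\ast$ at $s_1$: that flow line and the constant one. This contradicts uniqueness. Completeness of the flow guarantees that $\phi_s(x_0)$ exists for every finite $s$, positive or negative.

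The image $\{\phi_s(x_0)\}$ is connected, being a continuous image of an interval in $s$. It never meets any zero of $I$, so it stays inside the maximal interval $J$ of constant sign of $I$ that contains $x_0$. This gives $\phi_s(J)\subseteq J$ for all finite $s$. Applying the same inclusion to $\phi_{-s}$ yields equality, $\phi_s(J)=J$.

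Next I would prove $|Y(x)|\to\infty$ as $x\to x^\ast$ directly from the Lipschitz bound. Near $x^\ast$ one has $|I(z)|=|I(z)-I(x^\ast)|\le L|z-x^\ast|$. Hence, for $x$ approaching $x^\ast$ from inside $J$,
\[
|Y(x)-Y(x_0)|=\int_{x_0}^{x}\frac{\dd z}{|I(z)|}\ \ge\ \frac1L\int \frac{\dd z}{|z-x^\ast|},
\]
which diverges logarithmically. Here $I$ has constant sign on $J$, so the integrand $1/I$ has constant sign and the modulus can be taken inside the integral.

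Taking $Y$ on $J$ as a strictly monotone function that diverges at the endpoint $x^\ast$ ties this back to the formula \eqref{eq:stratmap}. If the other endpoint of $J$ is also a zero, or $\pm\infty$, then $Y$ again diverges there: at a zero by the same estimate, and at infinity by completeness. So $Y:J\to\mathbb{R}$ is a bijection. Consequently $Y^{-1}(Y(x)+s)$ is well defined in $J$ for every finite $s$, consistent with the uniqueness argument.

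The claim about trajectories then follows by induction on the events. Between events the drift $C$ moves the state continuously, and the statement is about kicks only, so the drift part is not at issue. At each event the Marcus prescription applies $\phi_{-\xi_q}$, which preserves $J$ whatever the value of $\xi_q$. The jump law $p(\xi)$ and the waiting times $\psi(\theta)$ never enter the argument.

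The step I expect to need the most care is the endpoint behaviour of $Y$ at a nonzero endpoint when $J$ is unbounded. That is where completeness is needed: I will argue via the escape time $\int^{\infty}\dd z/I(z)=\infty$. The uniqueness argument already secures the barrier itself.
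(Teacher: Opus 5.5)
Your proof is correct, but its main argument differs from the paper's.

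The paper works entirely through the quadrature formula \eqref{eq:stratmap}. The bound $|I(z)|\le L|z-x^\ast|$ forces $Y$ to diverge at least logarithmically at $x^\ast$, and since $\phi_s$ is translation by $s$ in the $Y$ coordinate, a finite $s$ cannot leave $Y(J)$.

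You instead obtain the barrier from uniqueness for the locally Lipschitz ODE: no orbit can meet the constant solution at a zero in finite virtual time. Connectedness of orbits then confines each orbit to its interval $J$. You use the Lipschitz estimate only afterwards, to recover the divergence of $Y$ and to show that $Y:J\to\mathbb{R}$ is a bijection.

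In one dimension these are the same fact: $\int^{x^\ast}\dd z/I(z)$ is exactly the virtual time needed to reach $x^\ast$. So your remark that the divergence of $Y$ ``does not drive'' the barrier is a choice of presentation. Given \eqref{eq:stratmap} it does drive it, and that is how the paper argues.

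Your route does not need an explicit $Y$, so it applies unchanged to flows of locally Lipschitz vector fields in any dimension. The paper's route is shorter and ties the barrier directly to the closed-form kicks used later, such as $u\mapsto u\,e^{-\xi\beta}$.

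Your version also makes explicit a step the paper leaves tacit. The phrase ``adding a finite $s$ cannot leave the image'' needs $Y(J)=\mathbb{R}$. At an unbounded end of $J$ this holds only because completeness forces $\int^{\pm\infty}\dd z/I(z)$ to diverge. You also obtain $\phi_s(J)=J$ rather than mere inclusion.

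One cosmetic fix: write $\bigl|\int_{x_0}^{x}\dd z/I(z)\bigr|$, and take $x_0$ inside the neighbourhood where the Lipschitz bound holds.
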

\begin{proof}
Local Lipschitz continuity gives $|I(z)|\le L|z-x^\ast|$ near $x^\ast$, so
$\int\dd z/|I(z)|\ge\int\dd z/(L|z-x^\ast|)$ and $Y$ diverges at least
logarithmically. Adding a finite $s$ to $Y$ in
\eqref{eq:stratmap} cannot leave the image of the interval.
\end{proof}

Proposition~\ref{prop:barrier} concerns the kicks. The drift, being a
continuous vector field, does cross $x^\ast$ whenever $C(x^\ast)\neq0$,
but only in the direction of its velocity there. For the case of interest,
$C(x)=\gamma x$, that velocity is $-C(x^\ast)=\gamma/\beta>0$, directed
into the sector $u>0$. That sector is therefore forward-invariant under
the full Marcus dynamics, whereas the sector $u<0$ is not: a trajectory
started below $x^\ast$ is carried across it by the drift and never
returns. The barrier is one-way. Since all our initial conditions lie
above $x^\ast$, the Marcus and Heun densities carry no mass below it at
any time, and any mass found there under It\^o can only have been placed
by a kick.

The Marcus master equation already acknowledges this. For $I(x)=1+\beta x$ and dichotomous kicks of amplitude $a$, 
the operators $e^{\pm a\partial_xI}$
act as dilations in $y=1+\beta x$. For exponential waiting times, $\psi(t)=e^{-t/\tau}/\tau$, a 
drift $C(x)=\gamma x$, and with $z\equiv a\beta$,
the stationary density $P(y)$ satisfies
\begin{equation}
0=\partial_y\!\left[\gamma(y-1)P(y)\right]
+\frac{1}{\tau}\left[\tfrac12 e^{-z}P\!\left(e^{-z}y\right)
+\tfrac12 e^{z}P\!\left(e^{z}y\right)-P(y)\right].
\label{eq:master_y}
\end{equation}
For $y\to0$, the dominant balance is between the drift and the inflow
from larger $y$:
\begin{equation}
\gamma P'(y)\simeq\frac{e^{z}}{2\tau}\,P\!\left(e^{z}y\right).
\label{eq:balance}
\end{equation}
We set $\ell\equiv\ln(1/y)$, $P=e^{S(\ell)}$, and $\nu(\ell)=-S'(\ell)$, so that $\nu$ is the local
exponent $\mathrm{d}\ln P/\mathrm{d}\ln y$. Expanding
$S(\ell-z)\simeq S(\ell)+z\,\nu-\tfrac{z^{2}}{2}\nu'$ with $\nu'\simeq 1/z$,
Eq.~\eqref{eq:balance} becomes
\begin{equation}
z\,\nu=\ell+\ln \nu+\ln(2\gamma\tau)-\frac{z}{2}.
\label{eq:wkb_implicit}
\end{equation}
Iterating once, with $L\equiv\ln(2\gamma\tau)$ and $c\equiv L-z/2$, we get
\begin{equation}
\nu(\ell)\simeq\frac{1}{z}\left[\ell+c+\ln\frac{\ell+L}{z}\right],
\end{equation}
and integrating gives
\begin{equation}
P(y)\simeq A\exp\!\left\{-\frac{1}{z}\left[\frac{\ell^{2}}{2}+c\,\ell
+(\ell+L)\left(\ln\frac{\ell+L}{z}-1\right)\right]\right\}.
\label{eq:wkb_result}
\end{equation}
When $y \to 0^+$, since $\nu\sim\ell/z\to\infty$, the density vanishes faster than any power of $y$. This is confirmed by simulations, see Fig.~\ref{fig:wkb} and 
discussion in Sec.~\ref{sec:numerics}.
The parameter $z=a\beta$ can also be written as $z=a/|x^\ast|$: it
measures the typical jump against the distance from the origin to the
zero of $I$.

\subsection{It\^o: a sharp threshold and a collapse}

Take $I(x)=1+\beta x$, $\beta>0$, $x^\ast=-1/\beta$, and an initial
condition $x>x^\ast$, i.e.\ $u=x-x^\ast>0$. The It\^o map gives
$u\mapsto u(1-\xi\beta)$, so

\begin{prop}[It\^o crossing threshold]
\label{prop:threshold}
Under the It\^o prescription the point $x^\ast$ is crossed at an event of
amplitude $\xi$ if and only if $\xi\beta>1$. Moreover:
\begin{itemize}
\item[(i)] for $\xi\beta<1$ the map is orientation preserving and the
      sector $u>0$ is invariant, as in the Marcus case;
\item[(ii)] at $\xi\beta=1$ the map $m_\xi$ is not injective: it collapses
      the \emph{entire} real line onto the single point $x^\ast$, so that
      a single event turns any density into $\delta(x-x^\ast)$;
\item[(iii)] for $\xi\beta>1$ the map is orientation reversing and
      every point of the sector is transported across $x^\ast$.
\end{itemize}
\end{prop}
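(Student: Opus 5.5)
The whole statement follows from writing the It\^o kick in the shifted coordinate $u=x-x^\ast$. Since $I(x)=1+\beta x=\beta(x-x^\ast)=\beta u$, Eq.~\eqref{eq:itomap} becomes
\[
  m_\xi:\ u\mapsto u-\xi\beta u=\lambda u,\qquad \lambda:=1-\xi\beta .
\]
So $m_\xi$ is a linear map of the line, multiplication by the constant $\lambda$, and it fixes $u=0$, i.e.\ $x^\ast$. The plan is to classify this map by the sign of $\lambda$ and read off each clause.

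First I would make precise what ``crossed at an event'' means. For a pre-event state $u\neq 0$, a crossing occurs when the post-event state lies strictly on the other side of $x^\ast$, i.e.\ $\operatorname{sgn}(\lambda u)=-\operatorname{sgn}(u)$. This holds if and only if $\lambda<0$, that is $\xi\beta>1$. That gives the ``if and only if''. Since $\beta>0$ and the condition is independent of $u$, it holds uniformly in the starting point.

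The three items are then the three sign cases.
\begin{itemize}
\item[(i)] If $\xi\beta<1$ then $\lambda>0$. Multiplication by a positive constant is an increasing bijection with $m_\xi'=\lambda>0$, so it preserves orientation and maps $u>0$ into itself. Moreover $m_\xi(u)=\lambda u$ has the form $u\,e^{-\xi'\beta}$ with $\xi'=-\beta^{-1}\ln(1-\xi\beta)$. This anticipates the abstract's remark that, below threshold, the It\^o kick is itself a Marcus kick \eqref{eq:linearcase} with a remapped jump law. It also explains why the invariance is ``as in the Marcus case'' of Proposition~\ref{prop:barrier}.
\item[(ii)] If $\xi\beta=1$ then $\lambda=0$, so $m_\xi\equiv x^\ast$ is constant on all of $\mathbb{R}$ and hence not injective. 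Its push-forward sends any probability measure $P$ to $\delta(x-x^\ast)$, because $\int f(m_\xi(x))P(x)\,\dd x=f(x^\ast)$ for every test function $f$.
\item[(iii)] If $\xi\beta>1$ then $\lambda<0$, so $m_\xi'<0$ and the map is orientation reversing. Every $u>0$ is sent to $\lambda u<0$, so the whole sector is transported across $x^\ast$, with the dilation factor $|\lambda|$.
\end{itemize}

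There is no real obstacle here; the argument is elementary once the map is linearised in $u$. The only points needing care are definitional. One is that ``crossing'' means a strict sign change, so that landing exactly on $x^\ast$ in case (ii) is treated separately as collapse rather than as a crossing. The other is stating (ii) as a statement about push-forwards of densities, where the non-invertibility shows up as the loss of absolute continuity.
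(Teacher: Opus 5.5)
Your proof is correct and follows the paper's own argument, which likewise reads every clause off the linear map $u\mapsto(1-\xi\beta)u$ and checks (ii) directly via $m_\xi(x)\equiv-1/\beta$ when $\xi\beta=1$. Your explicit definition of crossing as a strict sign change and your push-forward argument for (ii) simply spell out what the paper calls immediate.
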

\begin{proof}
Immediate from $u\mapsto u(1-\xi\beta)$; for (ii) note that
$m_\xi(x)=x(1-\xi\beta)-\xi\equiv-1/\beta$ for all $x$ when $\xi\beta=1$.
\end{proof}

Item (ii) deserves emphasis: the It\^o prescription is not merely a
different model, it is \emph{singular} on a codimension-one set of
parameters. Nothing of the kind happens under Marcus, where
$\phi_{-\xi}$ is a diffeomorphism for every $\xi$.

\begin{rem}[Below threshold, It\^o is Marcus with a remapped jump law]
\label{rem:remap}
For $I=1+\beta x$ the It\^o maps $u\mapsto(1-\xi\beta)u$ commute with
one another, although they do not satisfy the additive group law of
Proposition~\ref{prop:group}. Whenever $\xi\beta<1$ one can write
\begin{equation}
  1-\xi\beta=e^{-\xi'\beta},\qquad \xi'=-\frac{\ln(1-\xi\beta)}{\beta},
  \label{eq:remap}
\end{equation}
so each It\^o kick coincides \emph{exactly} with the Marcus kick of
amplitude $\xi'$. If $p(\xi)$ is supported in $\xi\beta<1$, the It\^o model
is therefore a Marcus model with the jump law $p'$ obtained by
transporting $p$ through Eq.~\eqref{eq:remap}, and the exact theory of
Ref.~\refcite{BianucciMain} applies to it with $\hat p$ replaced by
$\hat p'$. The remapped law is biased even when $p$ is symmetric: since
$-\ln(1-w)=w+w^2/2+\dots$, one has
$\beta\,\overline{\xi'}=\overline{\xi^2}\beta^2/2+O(\beta^4\overline{\xi^4})>0$.
For dichotomous jumps at $z=a\beta=0.5$ the amplitudes $\pm a$ become
$+1.386\,a$ and $-0.811\,a$. Below threshold the two prescriptions thus
differ by a reparametrisation of the noise -- a quantitative difference,
although not a small one.
 
The threshold acquires a transparent meaning. As $\xi\beta\to1^-$,
$\xi'\to+\infty$: the It\^o kick becomes a Marcus kick of infinite
amplitude, which sends every point onto $x^\ast$ -- the collapse of
Proposition~\ref{prop:threshold}(ii). For $\xi\beta>1$ the factor
$1-\xi\beta$ is negative and no real $\xi'$ exists, because no flow map
reverses orientation. The topological transition is the point at which
the remapped jump law ceases to exist.
\end{rem}

\subsection{Bounded versus unbounded jump distributions}

Combining Proposition~\ref{prop:threshold} with some specific jump laws gives a clean
classification. 
\begin{table}
\begin{center}
\begin{tabular}{lcc}
\toprule
jump PDF & It\^o crossing  & per-event crossing  \\
 & condition & probability \\
\midrule
dichotomous $\xi=\pm a$ & $a\beta>1$ & $\tfrac12\,\Theta(a\beta-1)$ \\
uniform on $[-\sqrt3a,\sqrt3a]$ & $\sqrt3a\beta>1$ &
   $\tfrac12\left(1-\tfrac{1}{\sqrt3 a\beta}\right)_{\!+}$ \\
Gaussian, variance $a^2$ & always & $\tfrac12\,\mathrm{erfc}\!\left(\tfrac{1}{\sqrt2\,a\beta}\right)>0$ \\
two-sided exponential, scale $a/\sqrt{2}$
& always & $\tfrac12 e^{-\sqrt{2}/(a\beta)}>0$ \\
\bottomrule
\end{tabular}
\end{center}
\caption{It\^o crossing condition and per-event crossing probability for the jump PDF's considered.}
\label{tab:crossing}
\end{table}
All distributions of Table~\ref{tab:crossing} have zero average
and variance $a^2$, but the expected behaviour is very different.
Qualitatively:
\begin{itemize}
\item For \emph{bounded} jump distributions the two prescriptions (Marcus
vs It\^o) agree on the
topology of the dynamics below a sharp threshold in $a\beta$ and disagree
above it (see also Figs.~\ref{fig1} and~\ref{fig2}).
\item For \emph{unbounded} jump distributions they disagree for every
choice of parameters, because the barrier leaks under It\^o with finite probability per
event, while no Marcus kick can cross it
(see Fig.~\ref{fig3}).
\end{itemize}

We note that the multiplicative examples of Ref.~\refcite{BianucciMain} use
$\beta=0.5$ with dichotomous jumps of unit variance, i.e.\ $a\beta=0.5$, which is
below the threshold of Proposition~\ref{prop:threshold}. The effect
discussed here is therefore invisible in those figures, making the present paper 
particularly relevant, given that we will choose $a\beta$ on both sides of 
unity here.

\section{The two master equations}

Under Marcus the kick operator is the push-forward along the flow,
whose generator is $\partial_x I(x)$, giving the operator of
Ref.~\refcite{BianucciMain},
\begin{equation}
  \partial_t P = \partial_x C(x)P
  + R(t)\left[\hat p\!\left(-i\partial_x I(x)\right)-1\right]P,
  \qquad
  \hat p\left(-i\partial_x I\right)=\int\! p(\xi)\,e^{\xi\partial_x I(x)}\dd\xi .
  \label{eq:MEstrat}
\end{equation}
Under It\^o the kick is the push-forward along $m_\xi$, which is not the
exponential of any first-order operator, and the master equation reads
\begin{equation}
  \partial_t P = \partial_x C(x)P
  + R(t)\left[\int\! p(\xi)\,\mathcal{M}_\xi\,\dd\xi-1\right]P,
  \qquad
  \left(\mathcal{M}_\xi P\right)(x)
  = \frac{P\!\left(m_\xi^{-1}(x)\right)}{\left|m_\xi'\!\left(m_\xi^{-1}(x)\right)\right|},
  \label{eq:MEito}
\end{equation}
where the push-forward is understood as a sum over preimages. For
$I=1+\beta x$ each point has a single preimage when $\xi\beta\ne1$, while
at $\xi\beta=1$ the map is constant and
$\mathcal{M}_\xi P=\delta(x-x^\ast)$ -- the analytic counterpart of
Proposition~\ref{prop:threshold}(ii). With the rate $R(t)$, Eq.~\eqref{eq:MEstrat} is the mean-field local equation of Ref.~\refcite{BianucciMain}: it is exact only for exponential waiting times, $R=1/\tau$, and Eq.~\eqref{eq:MEito} is the corresponding It\^o form. For any other waiting-time law the exact evolution, under either prescription, is the non-local Eq.~\eqref{eq:renewal}.

\begin{rem}
Both operators are Markov (positivity- and mass-preserving), so the two equations are equally ``legitimate'' as evolution
equations; the point
is that only \eqref{eq:MEstrat}  is generated by a flow -- its kick is invertible for
every $\xi$ -- and is a function of the single operator 
$\partial_xI(x)$, so that the jump law enters only through $\hat p$.
\end{rem}

\section{Numerical experiments} \label{sec:numerics}
\begin{figure}
    \centering
    \includegraphics[width=0.32\linewidth]{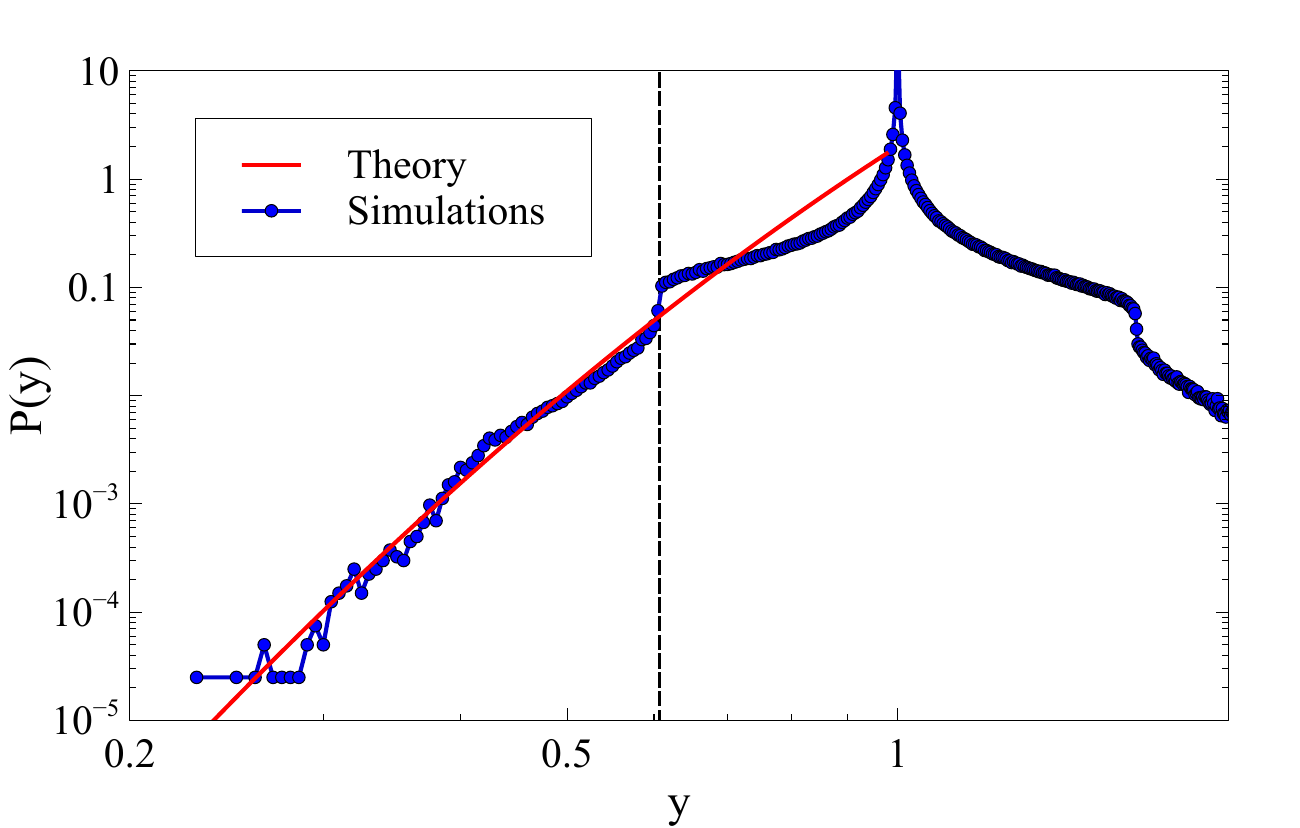}
    \includegraphics[width=0.32\linewidth]{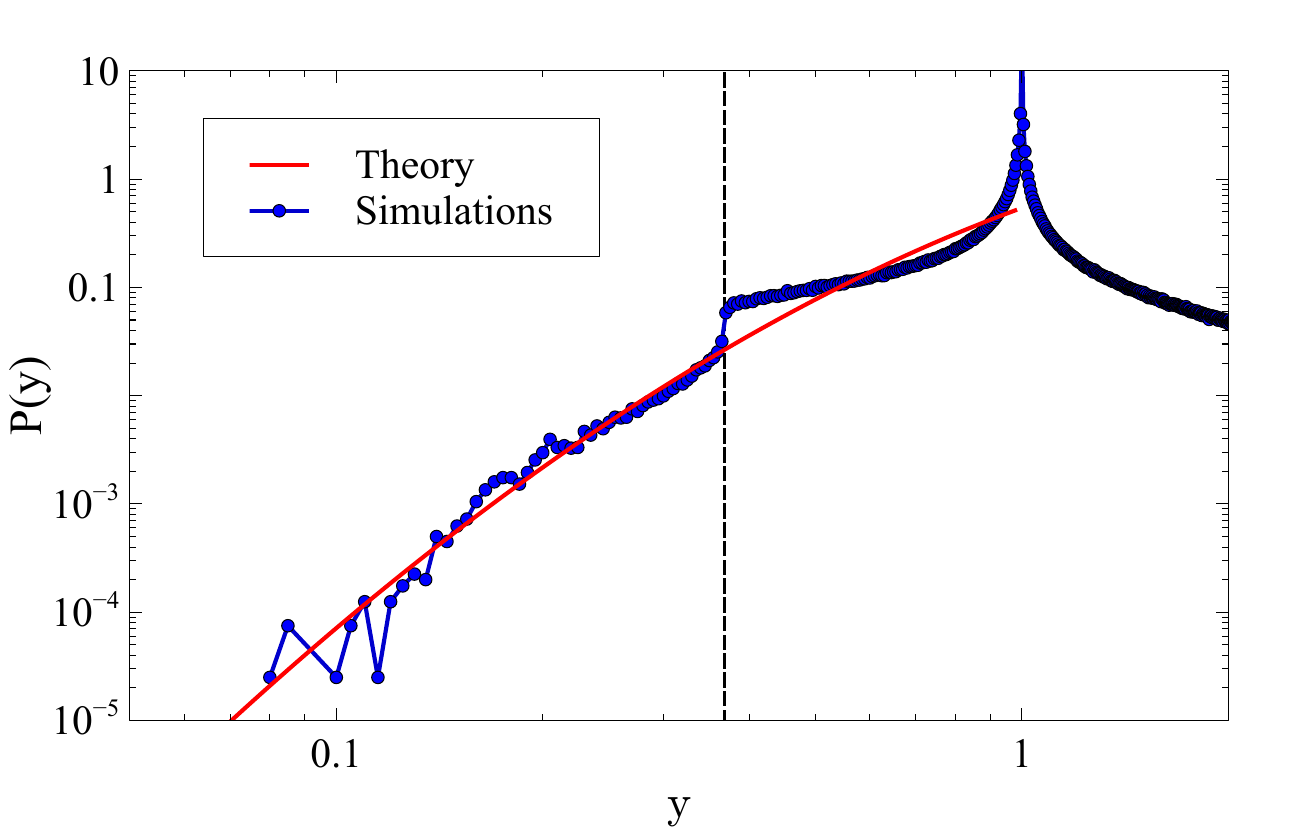}
    \includegraphics[width=0.32\linewidth]{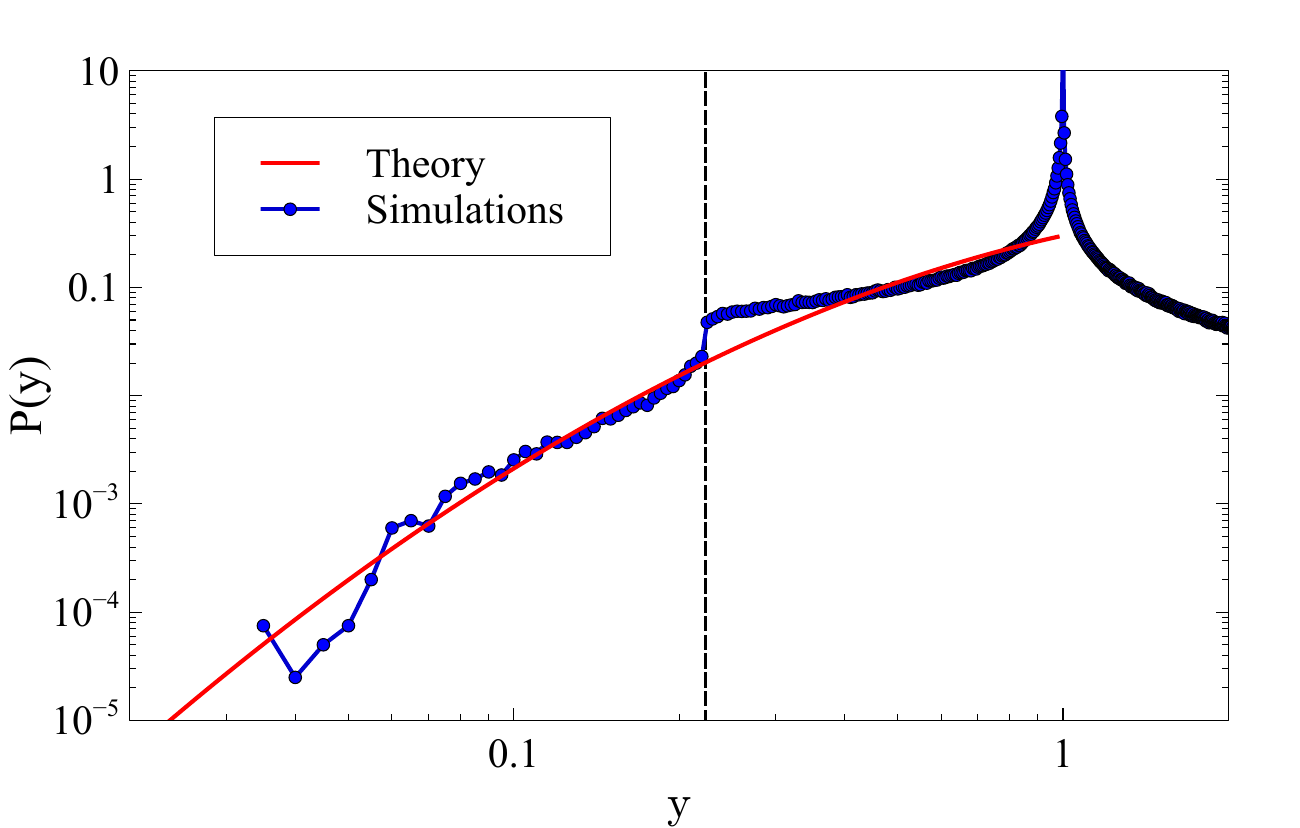}
    \caption{Probability density $P(y)$ from Monte Carlo integration of~\eqref{eq:sde} with the Marcus kick, 
    at $t=128$ (when stationarity is well established), compared with the asymptotic law
  \eqref{eq:wkb_result} for $z=0.5$ (left), $z=1.0$ (middle) and $z=1.5$ (right); 
  dichotomous jumps, $\beta=0.5$, $\gamma\tau=4$. The dashed line marks $y=e^{-z}$, below which
  Eq.~\eqref{eq:wkb_result} applies (see Sec.~\ref{sec:numerics}).}
    \label{fig:wkb}
\end{figure}

Numerical simulations were done with $\beta=0.5$ and $\gamma=1$, for the
four jump laws of Table~\ref{tab:crossing}, all with zero mean and
variance $a^2$, so that $z=a\beta$. Waiting times were drawn from
$\psi(\theta)=(\mu-1)T_0^{-1}(1+\theta/T_0)^{-\mu}$ with $T_0=1$ and
$\mu=1.5$ for all figures except Fig.~\ref{fig:wkb}, for which $\psi(\theta)=e^{-\theta/\tau}/\tau$ with $\gamma\tau=4$ was used.
Between events the drift
was integrated exactly, $x\mapsto x\,e^{-\gamma\theta}$. Trajectories,
of order $1.6\times10^{8}$ per case, started from a Gaussian of zero mean
and standard deviation $0.3$ and were followed to $t=128$
(to $t=256$ for Table~\ref{tab:heunacc}, whose averages are taken over $128\le t\le256$). 
The It\^o, Heun
and exact Marcus kicks were driven by the same random numbers (identical
event times and amplitudes), so that their differences are measured with
a variance much smaller than that of independent ensembles.
 

\subsection{The Heun kick integrator}

In Ref.~\refcite{BianucciMain} the SDE was integrated with Heun's scheme,
the standard algorithm for Stratonovich SDEs driven by a Wiener
process~\cite{Mannella2002,Mannella2025}. The Heun scheme for the delta kick integration was enforced as
\begin{equation}
  \tilde x = x-\xi I(x),
  \qquad
  x^{+}=x-\frac{\xi}{2}\left[I(x)+I(\tilde x)\right],
\end{equation}
which however reproduces \eqref{eq:expansion} through $O(\xi^2)$ 
only and therefore
implements Marcus not exactly but only to second order in the jump amplitude.

\subsection{The exact-flow kick integrator}

On the other hand, at 
a $\delta$ kick the
correct Marcus update is the exact flow map \eqref{eq:stratmap},
which for the cases of interest is available in closed form,
\begin{equation}
  I(x)=1 \ \Rightarrow\ \phi_{-\xi}(x)=x-\xi,
  \quad
  I(x)=1+\beta x \ \Rightarrow\
  \phi_{-\xi}(x)=-\tfrac1\beta+\left(x+\tfrac1\beta\right)e^{-\xi\beta},
\end{equation}
and otherwise requires one evaluation of $Y$ and one inversion.
      \begin{table}
      \begin{center}
      \begin{tabular}{cccc}
      \toprule
      $z=\xi\beta$ & $e^{-z}$ & Heun & rel.\ error \\
      \midrule
      0.25 & 0.7788 & 0.7813 & $+0.3\%$ \\
      0.50 & 0.6065 & 0.6250 & $+3.0\%$ \\
      1.00 & 0.3679 & 0.5000 & $+36\%$ \\
      2.00 & 0.1353 & 1.0000 & $+639\%$ \\
      3.00 & 0.0498 & 2.5000 & $+4921\%$ \\
      \bottomrule
      \end{tabular}
      \end{center}
      \caption{Contraction factor of the exact flow kick and of the Heun kick.}
      \label{tab:error}
      \end{table}
      
For $I=1+\beta x$ the Heun step acts as
$u\mapsto u\left(1-z+\tfrac{z^2}{2}\right)$ with $z=\xi\beta$, i.e.\ the
second-order Taylor truncation of the exact factor $e^{-z}$. This
has two consequences:
\begin{itemize}
\item the polynomial $1-z+z^2/2$ is the second-order truncation of
$e^{-z}$, and, like every truncation of even order, has no real root
(Sec.~\ref{sec:series}): the Heun kick preserves the barrier of
Proposition~\ref{prop:barrier} because of the parity of the truncation,
not because of its accuracy -- the third-order kick would cross it at
$z\simeq1.596$;
\item the magnitude is badly wrong once $z$ is not small, see
Table~\ref{tab:error}. The factor $1-z+z^2/2$ is smaller than one only for
$0<z<2$; for $z>2$ a kick that should contract towards the barrier pushes
the trajectory away from it.
\end{itemize}
This is a concrete, quantified reason to replace the Heun kick with the
exact flow map whenever $I$ is non-constant.

Figure~\ref{fig:wkb} compares the equilibrium density obtained with the Marcus kick and dichotomous 
jumps with the asymptotic law~\eqref{eq:wkb_result}. The data show a kink at $y=e^{-z}$, the image of 
the peak at $y=1$ under a single contracting kick; below it the density is built by repeated 
contracting kicks, which is exactly the regime of Eq.~\eqref{eq:wkb_result}. With $A$ in Eq.~\eqref{eq:wkb_result} fixed, for each $z$, by the 60th
percentile of the ratio between simulated and theoretical density over the
range $y<e^{-z}$, the agreement there is very good, confirming that under
the Marcus prescription the density vanishes faster than any power of $y$
as $y\to0$.


\begin{figure}
    \centering
    \includegraphics[width=0.32\linewidth]{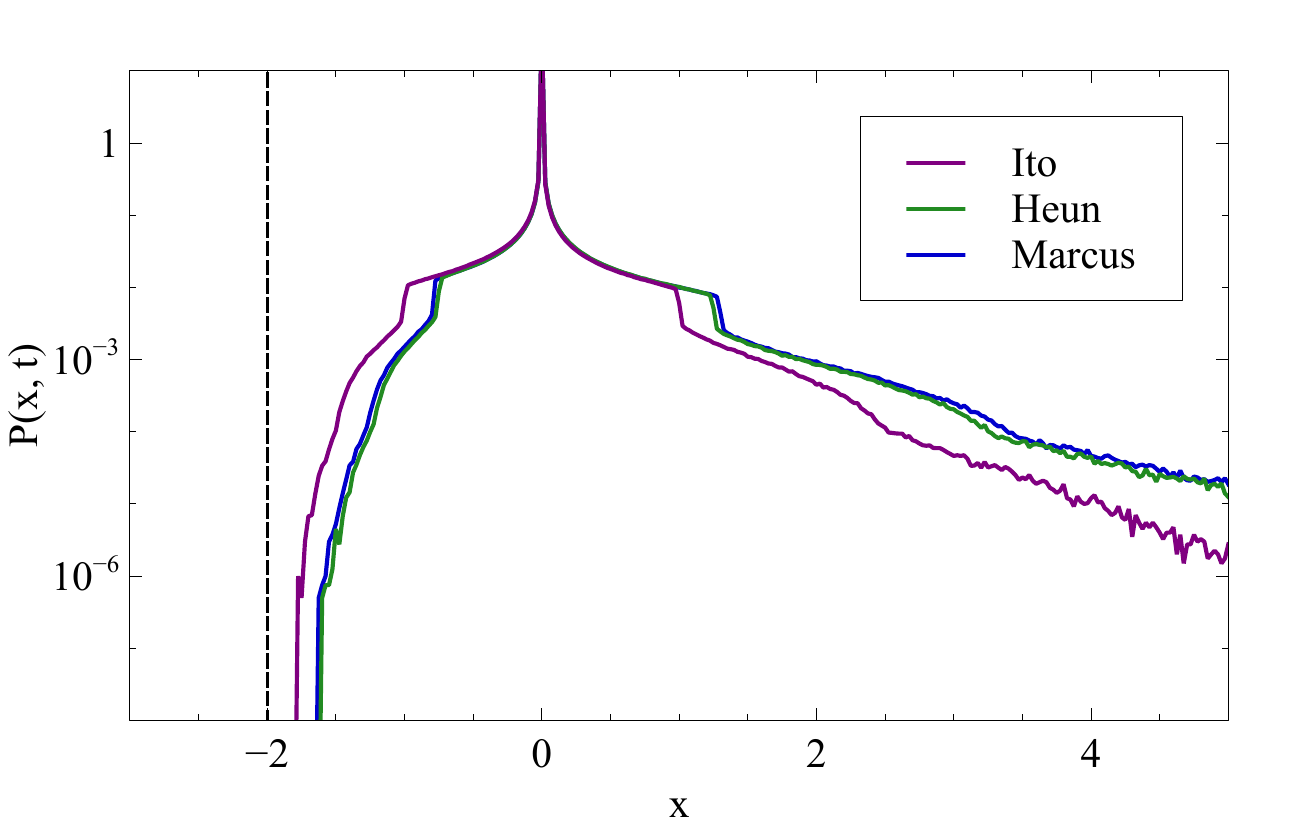}
    \includegraphics[width=0.32\linewidth]{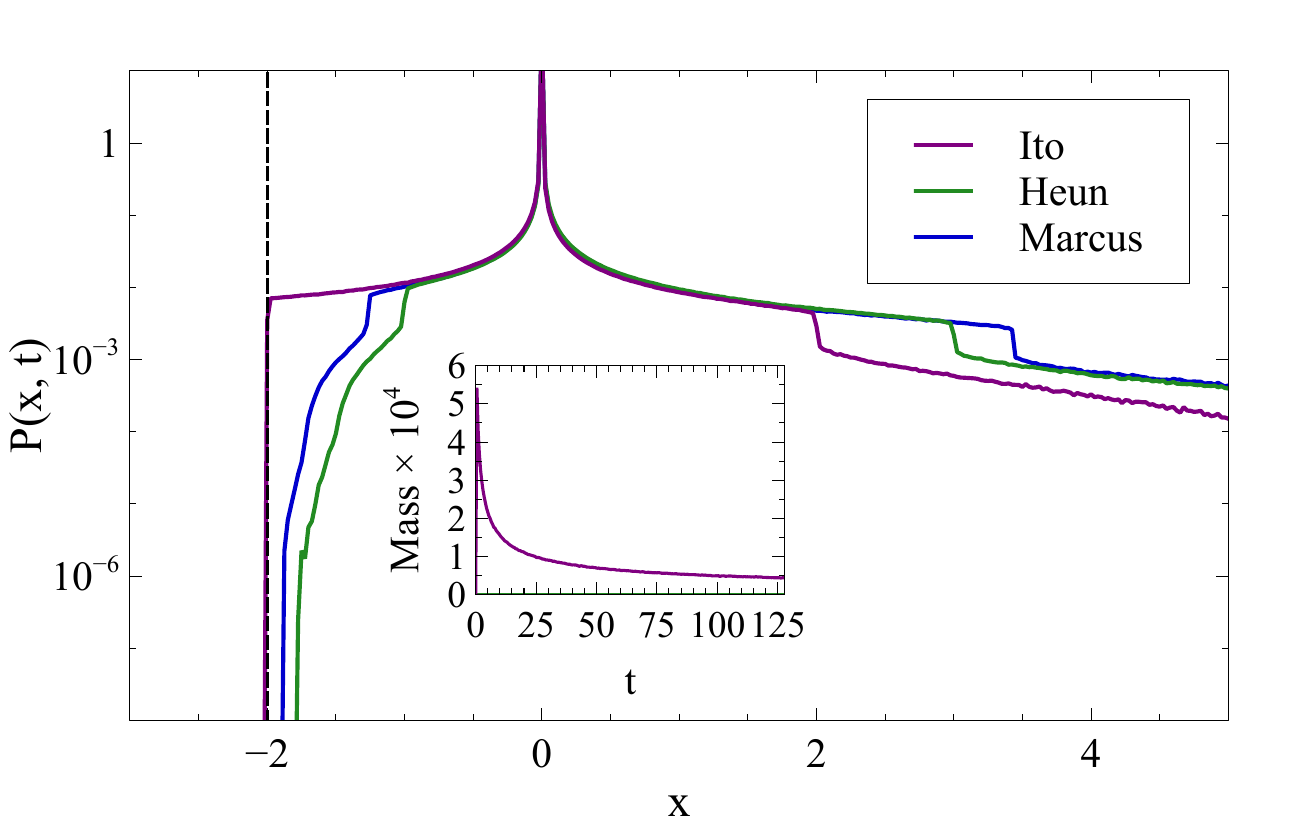}
    \includegraphics[width=0.32\linewidth]{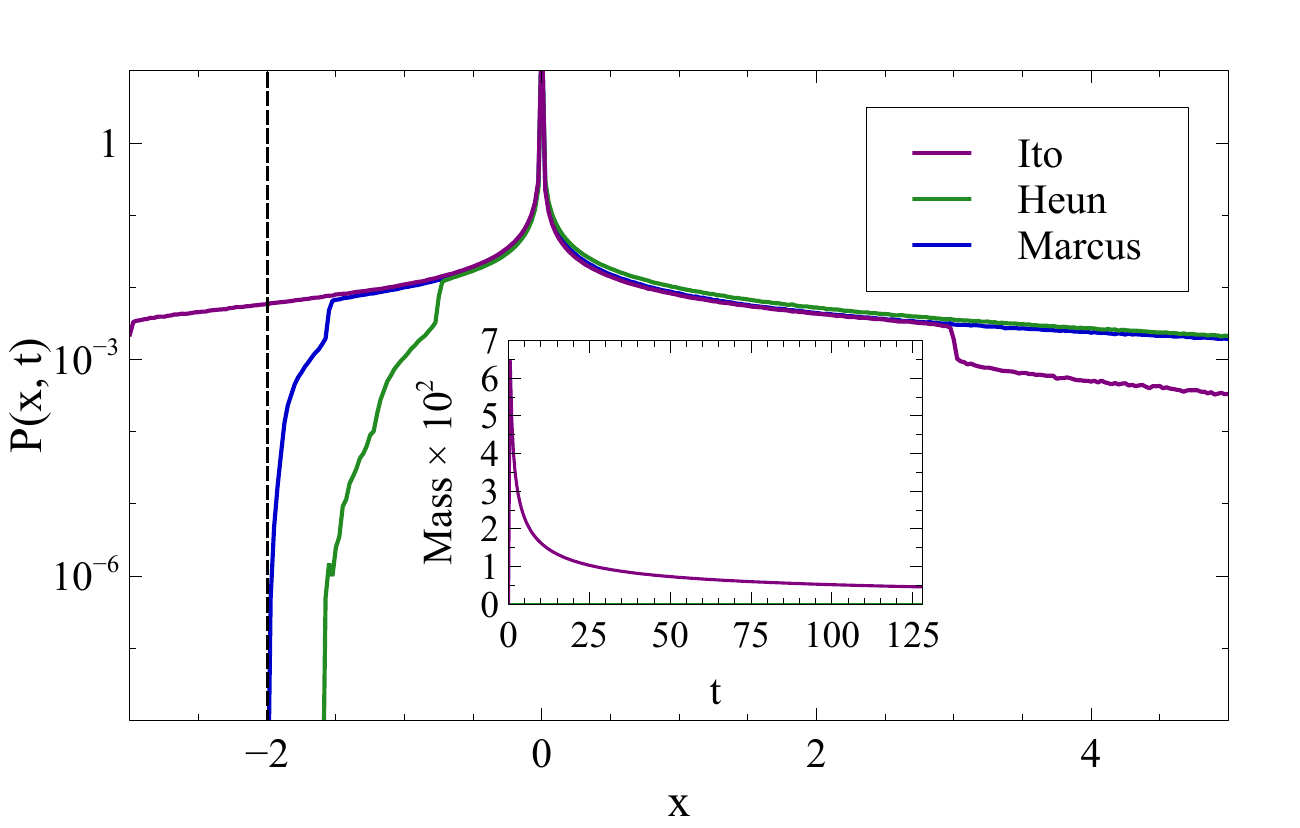}
    \caption{$P(x,t)$ at $t=128$ from Monte Carlo integration
  of \eqref{eq:sde}: 
  purple is It\^o, blue is Marcus, green is Heun, for
  $\beta=0.5$, dichotomous jumps. Left, $a\beta=0.5$ (below threshold), middle, $a\beta=1$ 
  (at threshold), right, $a\beta=1.5$ (above threshold). Only the It\^o density, and only in the right panel, 
  leaks below $x^*=-2$ (dashed line). 
  See the text for a description of the insets.}
    \label{fig1}
\end{figure}
\begin{figure}
    \centering
    \includegraphics[width=0.32\linewidth]{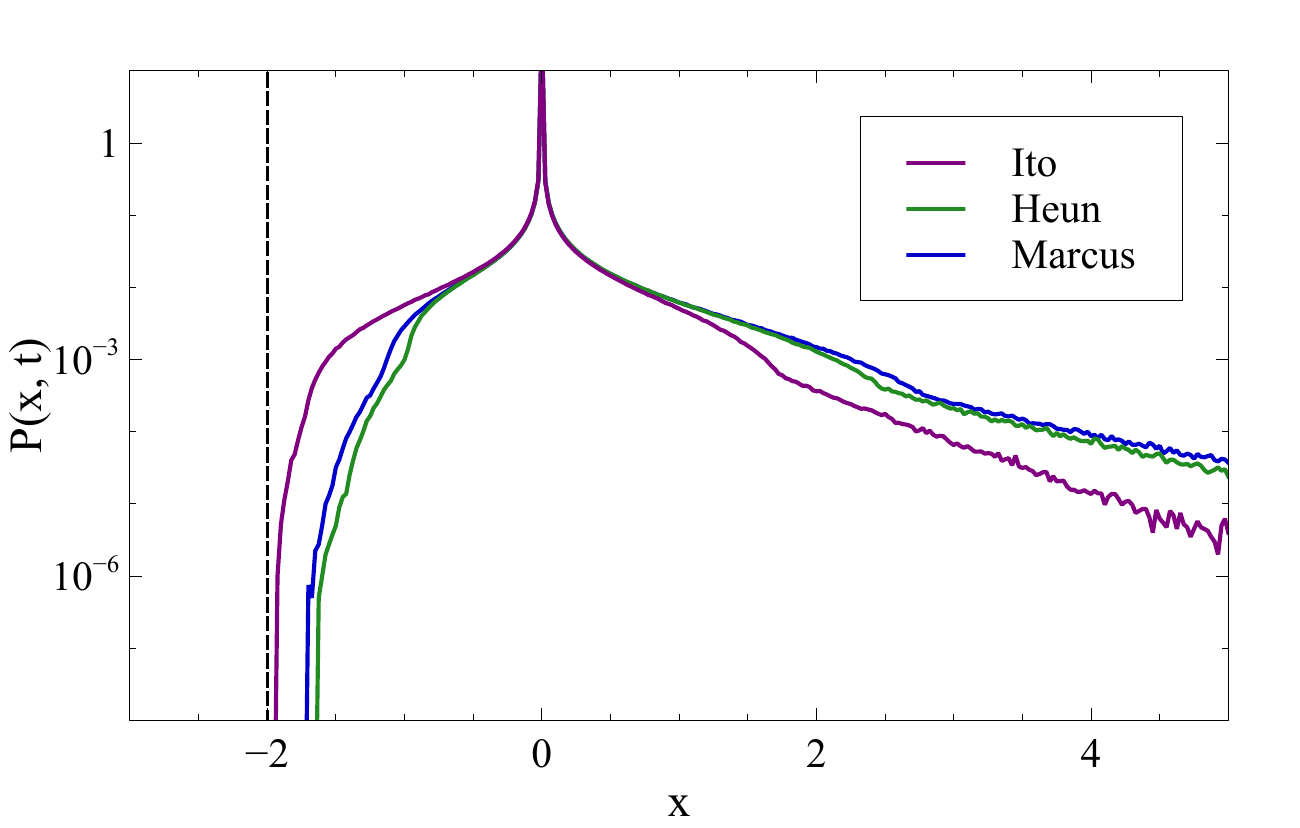}
    \includegraphics[width=0.32\linewidth]{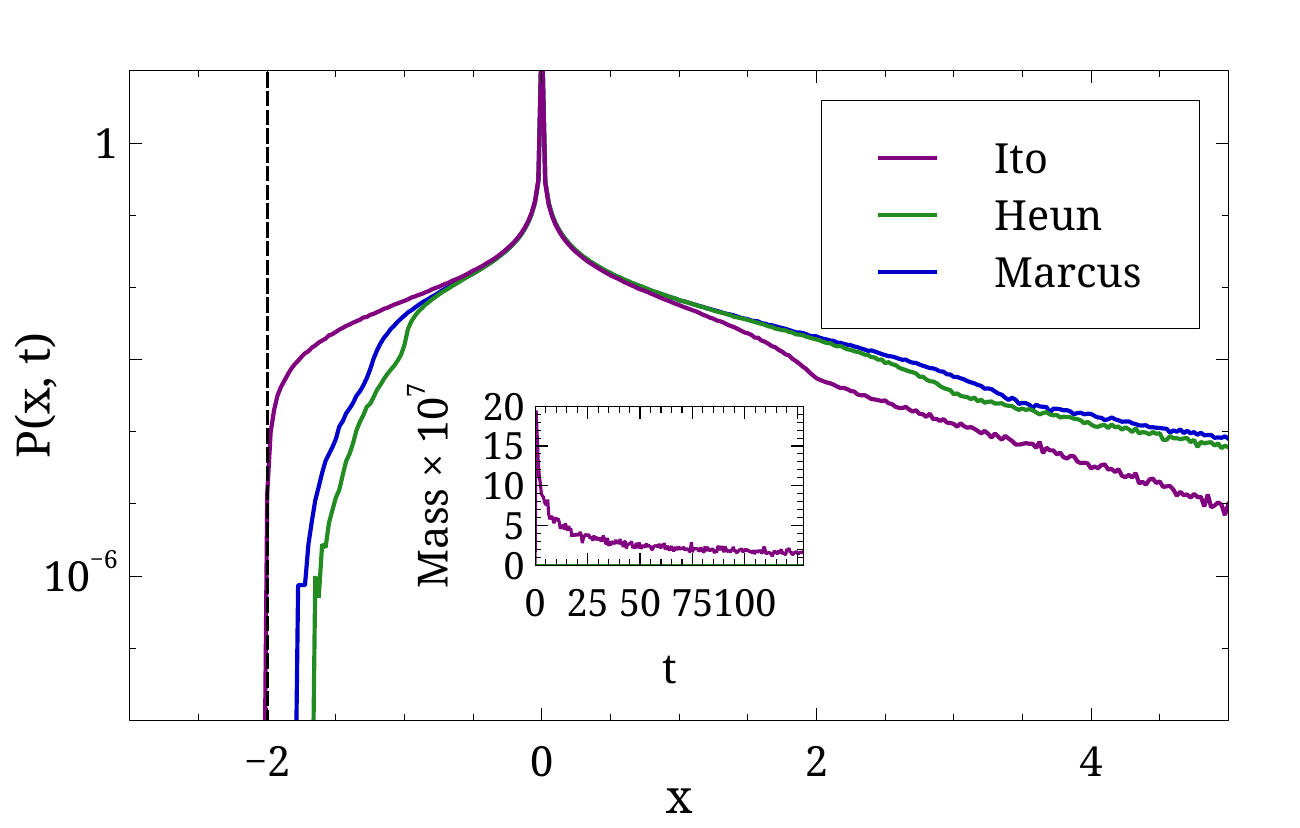}
    \includegraphics[width=0.32\linewidth]{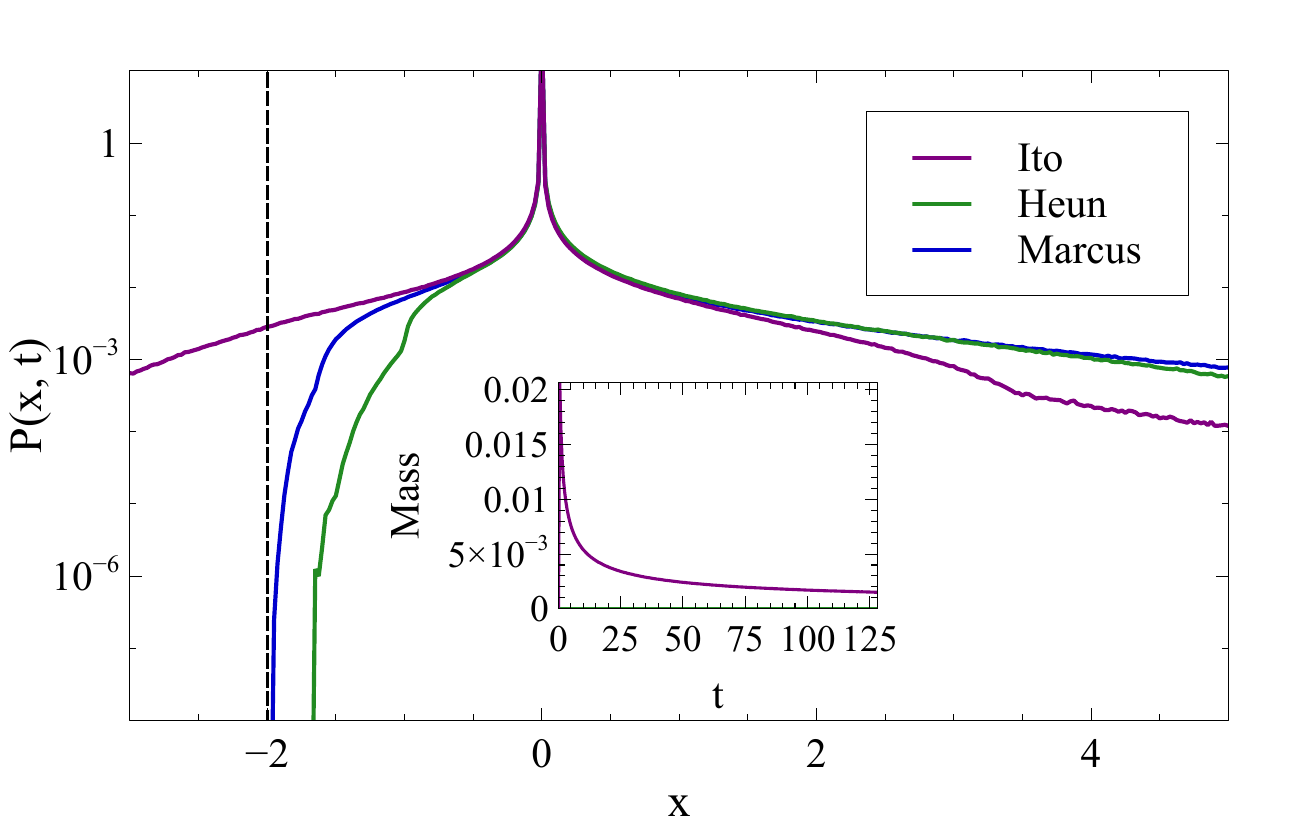}
    \caption{$P(x,t)$ at $t=128$ from Monte Carlo integration
  of \eqref{eq:sde}: 
  purple is It\^o, blue is Marcus, green is Heun, for
  $\beta=0.5$, uniform jumps. Left, $a\beta=0.5$ (below threshold), middle, $a\beta=1/\sqrt{3} \approx 0.577$ 
  (at threshold), right, $a\beta=1$ (above threshold). Only the It\^o density, and only in the right panel, leaks below $x^*=-2$ (dashed line). See the text for a description of the insets.}
    \label{fig2}
\end{figure}
\begin{figure}
    \centering
    \includegraphics[width=0.45\linewidth]{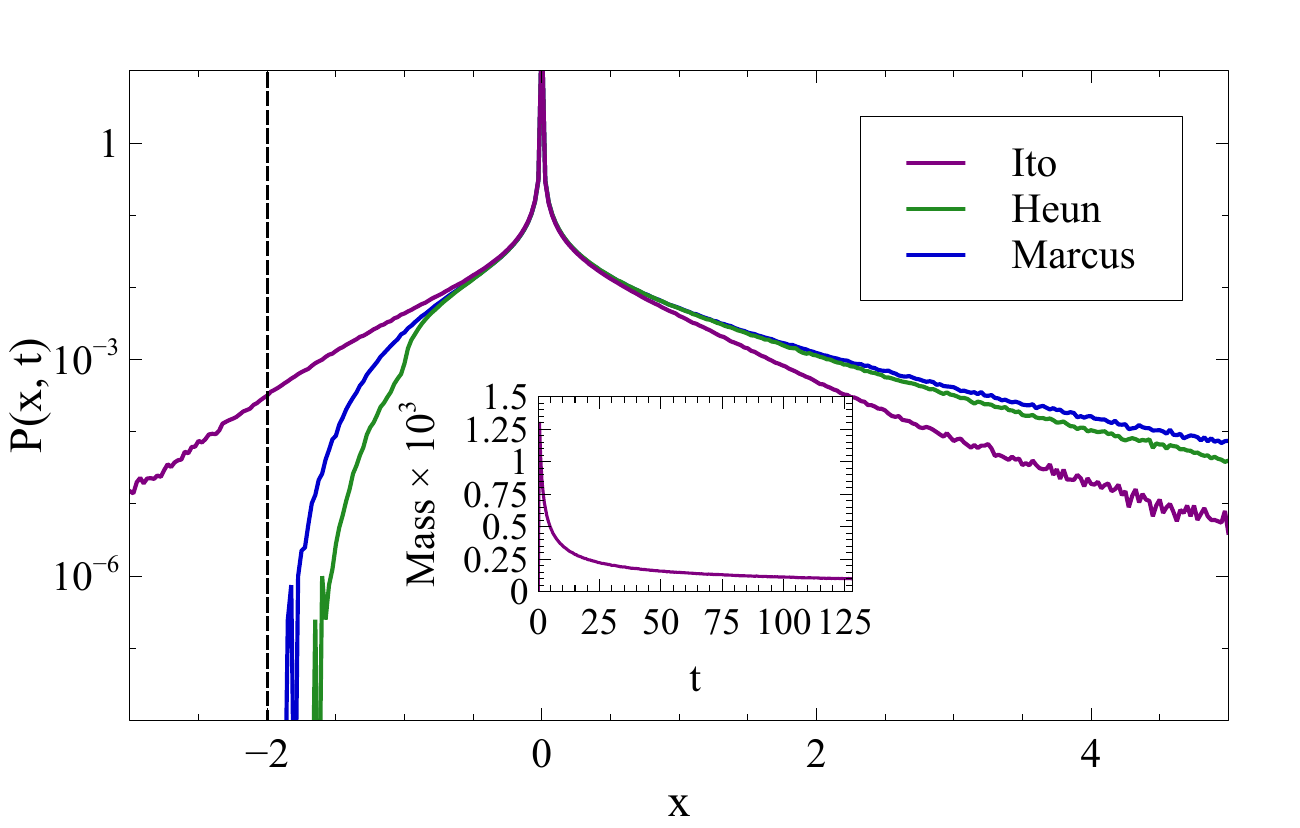}
    \includegraphics[width=0.45\linewidth]{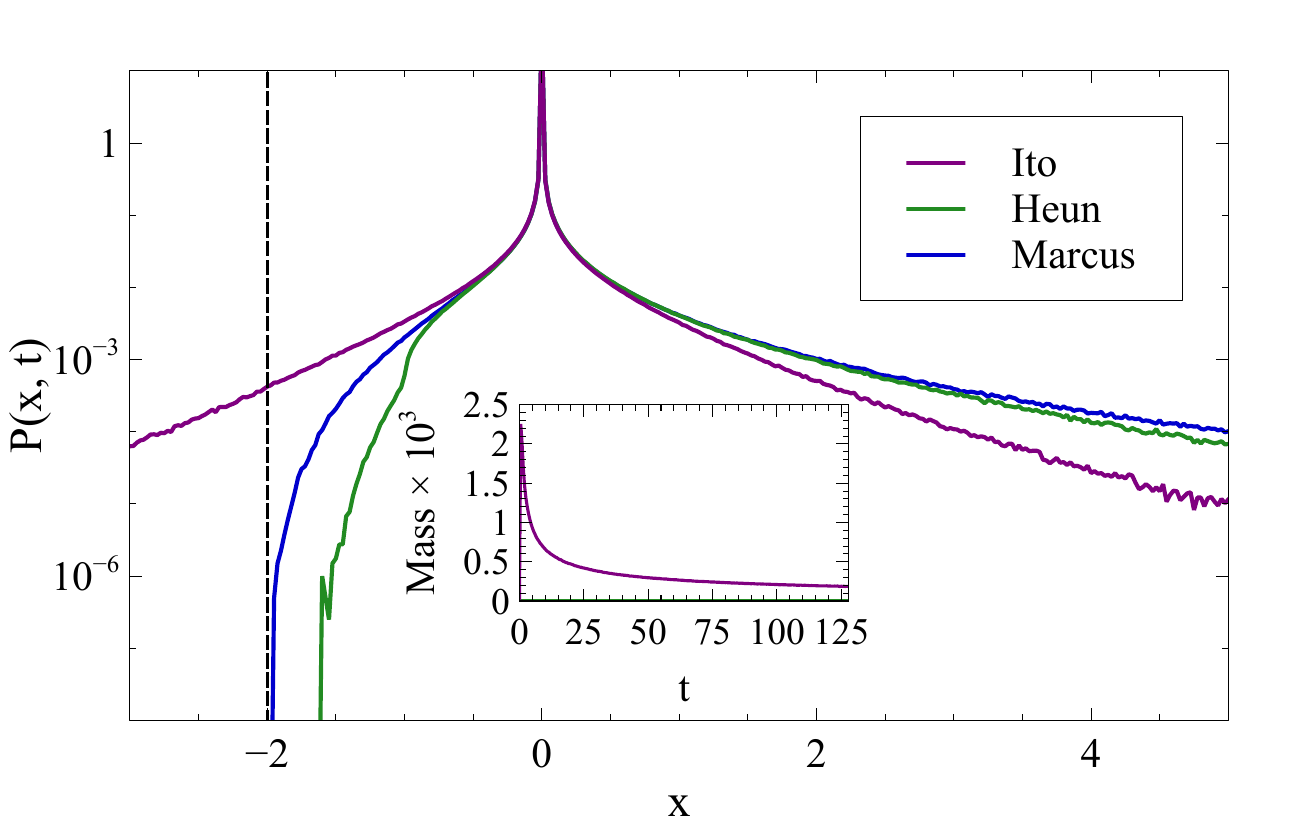}
    \caption{$P(x,t)$ at $t=128$ from Monte Carlo integration
  of \eqref{eq:sde}:  purple is It\^o, blue is Marcus, green is Heun, for
  $\beta=0.5$, $a=1$, and Gaussian jumps (left) or two-sided exponential jumps (right). Unlike the dichotomous case, the It\^o density leaks below $x^*=-2$ for any $a$, as expected for unbounded jumps. Insets: mass below $x^*$ as a function of time.}
    \label{fig3}
\end{figure}
\begin{figure}
    \centering
    \includegraphics[width=0.65\linewidth]{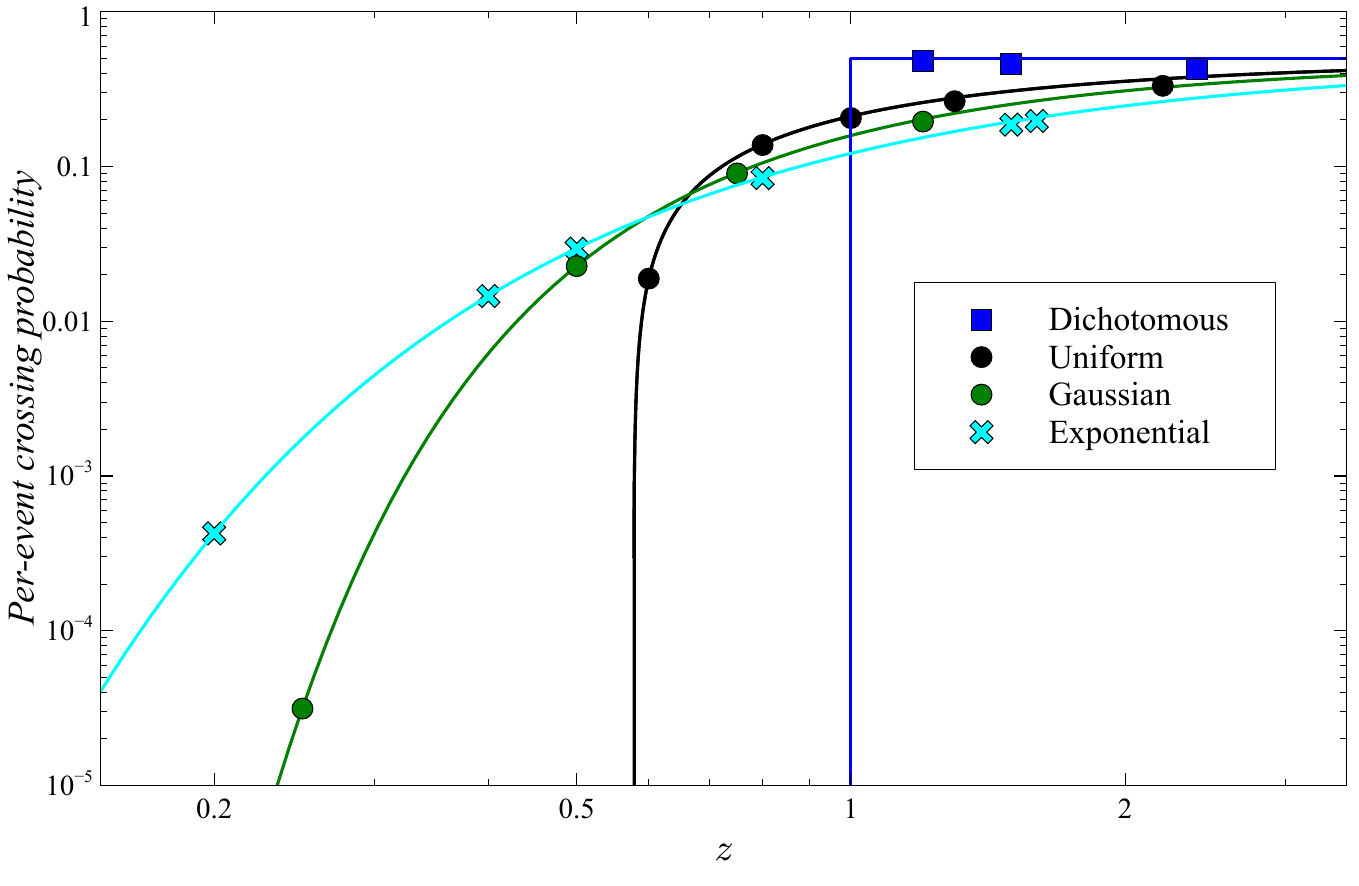}
    \caption{Per-event crossing probability versus $z$, for the four jump laws. 
    Symbols are the result of numerical simulations, 
    solid lines are the corresponding theoretical predictions 
    (see Table~\ref{tab:crossing}).}
    \label{figprob}
\end{figure}

Figures~\ref{fig1}--\ref{fig3} illustrate Proposition~\ref{prop:threshold}. Figure~\ref{fig1} shows $P(x,t)$ at $t=128$ 
for dichotomous jumps. 
For all values of $a$ 
considered, neither the Marcus nor the Heun kick leaks below $x^*=-2$. The It\^o kick does not leak either as long as $a\beta\le1$, 
but for $a\beta>1$ (right panel) it clearly does. The inset in the right panel shows the mass below $x^*$ as a function of time; 
it decreases because the drift carries trajectories back towards $x=0$. The inset in the middle panel shows the mass in the 
histogram bin containing $x^*$, i.e. within $dr/2=0.005$ of the barrier: these are It\^o trajectories just collapsed onto 
$x^*$ by kicks with $\xi\beta=1$ (Proposition 7(ii)) and carried back into $u>0$ by the drift. No mass lies strictly below 
$x^*$. Figure~\ref{fig2} shows the same behaviour for uniform jumps, except that at threshold the It\^o kick brings 
trajectories very close 
to $x^*$ (Proposition 7(i)) rather than exactly onto it. Figure~\ref{fig3} shows $P(x,t)$ at $t=128$ for Gaussian (left) 
and two-sided 
exponential (right) jumps. Although $a\beta<1$ in both cases, the It\^o density leaks markedly below $x^*$, as expected for 
unbounded jumps; again, neither the Marcus nor the Heun kick leaks.

Figure~\ref{figprob} compares the measured per-event crossing probabilities with the predictions of Table~\ref{tab:crossing}. The agreement is excellent. 

Figs.~\ref{fig1}--\ref{fig3} also show that the Heun density decays
towards the barrier faster than the exact Marcus one. The reason is not a
one-sided bias of the Heun factor ($1-z+z^2/2$ exceeds $e^{-z}$ for
$z>0$ but falls short of it for $z<0$) but the fact that it differs
from $e^{-z}$ by $z^3/6+O(z^4)$ with the sign of $z$. Every Heun jump is
therefore \emph{shorter} than the exact one, by a relative amount $z^2/6$,
in both directions. Shorter jumps produce a narrower density, and the
deficit is largest in the tails, where the mass is built by rare
sequences of large kicks. This is particularly evident in Fig.~\ref{fig1}, where the 
kinks in the distributions result from a single dichotomous kick applied to trajectories 
starting at the peak $x = 0$. Consequently, the distance between the kinks and $x = 0$ 
measures the size of the elementary kick under each scheme. In the left panel ($z = 0.5$), 
kinks are located at $x = \pm 1$ for the It\^o scheme, at $x = -0.79 / +1.30$ for Marcus,
and at $x = -0.75 / +1.25$ for Heun. In the middle panel, the It\^o distribution collapses
onto $x^\ast$ and features a kink at $x = 2$, whereas the kinks appear at 
$x = -1.26 /+3.44$ for Marcus and $x = -1.0 / +3.0$ for Heun. 
Finally, in the right panel, the It\^o
kinks are at $x = -3.0$ (outside the plot) and $x = +3.0$, the Marcus kinks are at $x =
-1.55$ and $x = 6.96$ (outside the plot), and the Heun kinks are at $x = -0.75$ and $x =
3.25$ (outside the plot).

\subsection{Accuracy of the Heun kick}
\label{sec:heunacc}
 
The shortening of the jumps has a simple quantitative consequence. Writing
$\zeta=\xi\beta$, the squared jump of $u$ is $u^2\zeta^2(1-\zeta+7\zeta^2/12)$
for the exact kick and $u^2\zeta^2(1-\zeta+\zeta^2/4)$ for Heun, so the
Heun jump variance is low by $u^2\overline{\zeta^4}/3$. Relative to the
leading term this is
\begin{equation}
  \frac{\Delta\overline{(\delta u)^2}}{\overline{(\delta u)^2}}
  \simeq -\frac{\overline{\zeta^4}}{3\,\overline{\zeta^2}}
  = -\frac{\kappa z^2}{3},
  \qquad
  \kappa=\frac{\overline{\xi^4}}{\overline{\xi^2}^{\,2}},
  \label{eq:heunbias}
\end{equation}
with $\kappa=1$, $1.8$, $3$ and $6$ for the dichotomous, uniform,
Gaussian and two-sided exponential laws. The error is $O(z^2)$ in
relative terms, one order lower than the $O(z^3)$ error of a single kick
in Table~\ref{tab:error}, and it grows with the kurtosis of the jump law.
Table~\ref{tab:heunacc} compares Eq.~\eqref{eq:heunbias} with the
measured bias of $\langle x^2\rangle$, together with the mass in the
region $x<-1$, half-way to the barrier.
 
\begin{table}[ht]
\begin{center}
{\begin{tabular}{@{}lcccc@{}}
\toprule
jump law & $z$ & $\Delta\langle x^2\rangle/\langle x^2\rangle$
  & $-\kappa z^2/3$ & $M_{\rm Heun}/M_{\rm exact}$\\
\midrule
dichotomous & 0.3 & $-2.99\%$  & $-3.0\%$  & $0.750$ \\
dichotomous & 0.5 & $-8.34\%$  & $-8.3\%$  & $0.719$ \\
dichotomous & 1.0 & $-44.3\%^{\rm a}$ & $-33.3\%$ & $0.214$ \\
Gaussian    & 0.3 & $-8.98\%$  & $-9.0\%$  & $0.366$ \\
Gaussian    & 0.5 & $-24.3\%$  & $-25.0\%$  & $0.211$ \\
Gaussian    & 1.0 & $-80.9\%^{\rm a}$ & -- & $0.081$ \\
Uniform    & 0.3 & $-5.32\%$  & $-5.4\%$  & $0.667$ \\
Uniform    & 0.5 & $-14.6\%$  & $-15.0\%$  & $0.377$ \\
Uniform    & 1.0 & $-61.7\%^{\rm a}$ & -- & $0.100$ \\
two-sided exp.    & 0.3 & $-17.1\%$ & $-18.0\%$ & $0.172$ \\
two-sided exp.    & 0.5 & $-46.1\%^{\rm a}$ & $-50.0\%$ & $0.131$ \\
two-sided exp.    & 1.0 & $-99.7\%^{\rm b}$ & -- & $0.074$ \\
\bottomrule
\end{tabular}}
\caption{Heun kick versus exact Marcus kick, $\beta=0.5$, $\mu=1.5$, same noise realisations; second moments and masses averaged over $128\le t\le256$. Relative bias of $\langle x^2\rangle$, measured and from Eq.~\eqref{eq:heunbias} (leading order; “—” where it does not apply), and ratio of the probability masses in $x<-1$ (Heun/exact). $^{\rm a}$\,Estimate dominated by rare large excursions, because the kick
factor has a large or infinite fourth moment: independent runs differ by
several percent.
$^{\rm b}$\,The exact $\langle x^2\rangle$ is infinite, since
$\langle e^{2\beta\xi}\rangle$ diverges for this law when $z\ge1/\sqrt2$.}
\label{tab:heunacc}
\end{center}
\end{table}

Two features stand out. First, Eq.~\eqref{eq:heunbias} is accurate up to $z\approx0.5$ for the dichotomous, uniform and Gaussian laws, so the bias of the moments can be predicted before any simulation is run; for the two-sided exponential law the estimate at $z=0.5$ is not converged, for the reason given in the caption of Table~\ref{tab:heunacc}. Second, the near-barrier region -- the
subject of this paper -- is where the Heun kick is worst: for Gaussian
jumps at $z=0.5$ it underestimates the mass in $x<-1$ by a factor of more
than four, while the per-kick error of Table~\ref{tab:error} is only 3\%.
The simulations of Ref.~\refcite{BianucciMain} used dichotomous jumps at
$z=0.5$, where the second-moment bias is about 8\%.

\section{Which prescription does an experiment realise?}

The answer is set by the comparison of two time scales and one amplitude.
Let $\epsilon$ be the true duration of the physical impulse and
$T_{\mathrm{rel}}$ the relaxation time of $x$ under the drift. If
$\epsilon\ll T_{\mathrm{rel}}$ the spike idealisation is justified, and
Proposition~\ref{prop:shape} then forces the Marcus prescription
\emph{regardless of how small $\epsilon$ is}: the limit is singular in
$\epsilon$ but not in the prescription. The It\^o prescription is appropriate
only when the event is genuinely atomic and its effect is specified
externally -- for instance a counter increment, a discrete packet arrival
whose size is fixed before the system state is consulted, or a model in
which $I(x)$ represents a measured gain sampled before the event. In
other words:
\begin{itemize}
\item physical impulse of short but nonzero duration, gain varying
      continuously with the state $\Rightarrow$ Marcus;
\item discrete, externally specified event whose magnitude is determined
      by the pre-event state $\Rightarrow$ It\^o.
\end{itemize}
A concrete instance of the first class is conductance-based synaptic
input. A synaptic event of strength $a$ opens a brief conductance pulse,
$\dot V=g(t)\,(E-V)$ with $g(t)=a\,\delta(t-t_q)$, so that the gain
$I(V)\propto E-V$ vanishes at the reversal potential $E$. The flow-map
kick $V\mapsto E+(V-E)\,e^{-a}$ is the exact solution during a short
conductance pulse~\cite{Richardson2005}, and the membrane potential
approaches its reversal potential without ever crossing it. The It\^o
kick $V\mapsto V+a(E-V)$ overshoots the reversal potential whenever
$a>1$, which is physically impossible: here the physics settles the
prescription, and it is Marcus.
 
When the pulse duration and the relaxation time of the system are
comparable, neither idealisation applies. For systems with inertia driven
by compound Poisson coloured noise, Li, Min and Wang showed that adiabatic
elimination yields the It\^o or the Marcus form, or an intermediate one,
depending on the ratio between the relaxation time and the correlation
time of the noise~\cite{AdiabaticPRE2014}.

Where $I$ has a zero, the two are not interchangeable at any order, and
the choice must be made on physical grounds before any fitting is done.

\section{Conclusions}

In contrast to Gaussian white noise, where the It\^o and Stratonovich prescriptions differ by $O(\dd t)$ per
time step and the difference accumulates into a finite spurious drift, state-dependent renewal shot noise with finite random jump amplitudes $\xi_q$ produces a finite, non-negligible discrepancy at every single kick event. This discrepancy cannot be re-absorbed into a renormalized drift operator $\Delta C(x)$ beyond lowest order in the jump amplitude. Every term of that series vanishes where $I$ does, so a renormalised drift
cannot alter the transport across a zero of $I$, whereas the It\^o kick
can carry probability across it. Consequently, wherever $I$ vanishes the choice between It\^o and Marcus
is non-perturbative: for linear $I$, below a threshold it amounts to a
remapping of the jump law, above it to a change in the topology of the
dynamics.

Structurally, the Marcus kick represents the exact time-$(-\xi_q)$ map of the autonomous flow $dx/ds = I(x)$. It acts as a diffeomorphism on the real line that preserves each maximal sign interval of $I(x)$ and maintains the group property $\phi_{s_1} \circ \phi_{s_2} = \phi_{s_1 + s_2}$. Zeros $x^\ast$ of $I(x)$ therefore cannot be crossed by any Marcus kick,
whatever the jump amplitudes or waiting-time statistics. Combined with a
drift directed into the physical sector, as for $C(x)=\gamma x$, they act
as one-way barriers;  in the stationary state, computed here for exponential waiting 
times, the density vanishes there faster than any power of $1+\beta x$, $P_{\rm eq}(x^*)=0$. In contrast, the It\^{o} kick represents only the tangent approximation
$m_\xi(x)$ of this flow and lacks both invertibility and the group structure. For gain functions such as $I(x) = 1 + \beta x$, the It\^{o} prescription exhibits a sharp
threshold at $\xi\beta=1$: below threshold it preserves orientation, at threshold it collapses the entire line onto $x^*$, and above threshold it reverses orientation,
allowing trajectories to cross $x^*$. As a result, for unbounded jump distributions (such as Gaussian or exponential), It\^{o} dynamics leaks across the barrier with finite
probability per event across all parameter ranges.

An exact non-local master equation exists under either prescription, but
only the Marcus kick is generated by a single operator; the closed-form
master equation of Ref.~\refcite{BianucciMain}, whose generator depends on
the jump law only through its characteristic function, is therefore
intrinsically a Marcus equation. Finally, from a numerical perspective, standard white-noise integration schemes such as the Heun kick method only match the Marcus flow to second order $O(\xi^2)$. While Heun accidentally preserves the barrier for $I(x)=1+\beta x$, because it is a truncation of even order, 
it shortens every jump:
second moments are biased by a relative $\kappa z^2/3$, and the mass near
the barrier is depleted already at moderate $z$ -- by a factor of four for
Gaussian jumps at $z=0.5$. To integrate stochastic differential equations driven by state-dependent shot noise accurately without artificial numerical artifacts, algorithms must employ the exact flow map $\phi_{-\xi}(x)$ in place of traditional white-noise approximations.

\section*{Acknowledgments}
We thank the Green Data Center of University of Pisa for providing the computational power needed for the present paper. During the preparation of this manuscript, the authors used Claude (https://claude.ai) for the purposes of improving the English and readability of some of the textual content. The authors have reviewed and edited the output and take full responsibility for the content of this publication. Data are available upon reasonable request.

\end{document}